\DeclareSymbolFont{tipa}{T3}{cmr}{m}{n}
\DeclareMathAccent{\invbreve}{\mathalpha}{tipa}{16}
\newtheorem{definition}{Definition}
\newtheorem{theorem}{Theorem}
\newcommand{\connected}{\rightleftharpoons}
\newcommand{\relevant}{\bowtie}
\newcommand{\game}{\mathcal{G}}
\newcommand{\blueprint}{\bm{{\xi}}_0}
\newcommand{\blueprintstrat}{{\xi}_0}
\newcommand{\polytopesub}{\Xi_j}
\newcommand{\planref}{\tilde{\bm{\xi}}} % complete refinement
\newcommand{\plansub}{\tilde{\bm{\xi}}_j}
\newcommand{\plansubstrat}{\tilde{\xi}_j}
\newcommand{\sequencepre}{\hat{\Sigma}_i}
\title{Safe Subgame Resolving for Extensive Form Correlated Equilibrium}
\author{
    %Authors
    % All authors must be in the same font size and format.
    %Written by AAAI Press Staff\textsuperscript{\rm 1}\thanks{With help from %the AAAI Publications Committee.}\\
    %AAAI Style Contributions by Pater Patel Schneider,
    Chun Kai Ling,\\
    Fei Fang
    %George Ferguson,
    %Hans Guesgen,
    %Francisco Cruz\equalcontrib,
    %Marc Pujol-Gonzalez\equalcontrib
}
\title{My Publication Title --- Single Author}
\author {
    Author Name
}
\title{Safe Subgame Resolving for Extensive Form Correlated Equilibrium}
\author {
    % Authors
    Chun Kai Ling,
    Fei Fang
    % Third Author Name \textsuperscript{\rm 1}
}
\begin{document}

\maketitle

\begin{abstract}
%Correlated Equilibrium is a solution concept which is known to produce outcomes with better social welfare than Nash Equilibrium. 
Correlated Equilibrium is a solution concept that is more general than Nash Equilibrium (NE) and can lead to outcomes with better social welfare. However, its natural extension to the sequential setting, the \textit{Extensive Form Correlated Equilibrium} (EFCE), requires a quadratic amount of space to solve, even in restricted settings without randomness in nature. To alleviate these concerns, we apply \textit{subgame resolving}, a technique extremely successful in finding NE in zero-sum games to solving general-sum EFCEs. Subgame resolving refines a correlation plan in an \textit{online} manner: instead of solving for the full game upfront, it only solves for strategies in subgames that are reached in actual play, resulting in significant computational gains. In this paper, we (i) lay out the foundations to quantify the quality of a refined strategy, in terms of the \textit{social welfare} and \textit{exploitability} of correlation plans, (ii) show that EFCEs possess a sufficient amount of independence between subgames to perform resolving efficiently, and (iii) provide two algorithms for resolving, one using linear programming and the other based on regret minimization. Both methods guarantee \textit{safety}, i.e., they will never be counterproductive. Our methods are the first time an online method has been applied to the correlated, general-sum setting.
\end{abstract}

\section{Introduction}
Correlation between players is a powerful tool in game theory. The \textit{Correlated Equilibrium} (CE) is an equilibrium that allows for players to coordinate actions with the aid of a mediator or a randomized correlation device, and is known to allow for outcomes which lead to a significantly higher social welfare as compared to solution concepts which require independent play, such as Nash Equilibrium (NE), on top of being computationally more tractable. 
In a CE, the mediator recommend actions privately to the players according to a probability distribution over joint actions that is known to all players, and the players have no incentive to deviate from the recommended action if they are perfectly rational.
A natural extension of CE to \textit{extensive form games} (EFG) is the \textit{Extensive Form Correlated Equilibrium} (EFCE), where players are recommended actions at each decision point~\cite{von2008extensive}. 
%However, in imperfect information \textit{extensive form games} (EFG), \citep{von2008extensive} show that finding a CE which optimizes for social welfare is NP hard. A natural alterantive they propose is the \textit{Extensive Form Correlated Equilibrium} (EFCE), where players are recommended actions not upfront, but only at decision points. 
%Unlike CE, finding an optimal EFCE is tractable in EFGs without chance. 
%In recent years, there has been renewed interest in the computational game theory community regarding related solution concepts and efficient solvers. 
Unfortunately, solving and storing an EFCE typically requires space that is quadratic in the size of the game tree. This is a significant barrier towards solving large games: for example, storing an EFCE for a game of Battleship \cite{farina2019correlation} with a grid size of $3 \times 2$ requires a vector with more than $10^8$ entries. 

Over the last decade, a technique known as subgame resolving has gathered much attention amongst those looking to solve large games. The idea behind subgame resolving is to adopt a simple blueprint strategy at the beginning, and to compute refinements of the strategy in an \textit{online} manner only when the game has entered a \textit{subgame}. This means that one need not compute strategies in branches of the game which were never reached in actual play, just like with limited-depth search in perfect information games like chess. Rising into prominence because of the successes of superhuman-level poker bots such as \textit{Libratus} \citep{brown2017safe,brown2018superhuman}, subgame resolving has since been studied from other angles \cite{zhang2021subgame}, extended to other equilibrium concepts \citep{ling2021safe} and applied in practice to, multiplayer games like Hanabi \cite{lerer2020improving} and Diplomacy \citep{gray2020human}. However, subgame resolving has primarily been applied to the zero-sum or cooperative settings, with few inroads in the correlated setting, where the objective is to get players to coordinate despite potentially having misaligned interests.

In this paper, we introduce subgame resolving for EFCE. Instead of announcing the full correlation plan that specifies the probability of recommending different actions at each decision point, the mediator computes the EFCE strategies online. Conceptually, it can be viewed as having the mediator publish the algorithm of choosing recommended actions, and the algorithm is designed in a way such that the rational players will have no incentive to deviate from the recommended actions. Our contributions are twofold. First, we lay out the framework for safe subgame resolving for EFCE in terms of the exploitability of a correlation plan with respect to a correlation blueprint. Second, we show that for games without chance, the structure of the polytope of correlation plans contains a sufficient level of independence between subgames to facilitate independent solving. Third, we provide two refinement algorithms, the first based on a modification of the linear program (LP) of \citet{von2008extensive}, and the second utilizing a recent and more efficient method based on regret minimization \cite{farina2019efficient}. To the best of our knowledge, this is the first instance of subgame resolving being applied to the correlated setting. We experimentally show its scalability in benchmark games.

\section{Background and Related Work}
Let $\game$ be a 2-player extensive-form-game \textit{without chance}. This is represented by a finite game tree: nodes represent game states, belonging to either player $P_1$ or $P_2$, while actions are represented by edges directed down the tree. 
To represent imperfect information, $\game$ is supplemented with \textit{information sets} (infosets) $I_i \in \mathcal{I}_i$, $i \in [2]$, which are collection of states belonging to but are indistinguishable to $P_i$. 
States in the same infoset contain the same actions $a_i \in \mathcal{A}(I_i)$. 
% A state $h$ can be uniquely identified by a sequence of actions taken by each player. 
We denote by $ha$ the state that is reached immediately after taking action $a$ at state $h$.
We say that state $h$ precedes ($\sqsubset$) $h'$ if $h\neq h'$ and $h'$ is a descendent of $h$ in the game tree, and use the notation $h \sqsubseteq h'$ when allowing $h=h'$. %Similarly, ()
We assume players have perfect recall, that is, players never forget past observations and past actions.
%Formally, we require that (i) if $h$ and $h'$ belonging to $P_i$ are in distinct information sets $I_i$ and $I_i'$, then their descendants $s \sqsupset h$ and $s' \sqsupset h'$ never belong to the same information set, (ii) if states $h, h'$, possibly equal, belong to the same information set $I_i$, which contains distinct actions $a, a'$, then the states $s \sqsupseteq ha$ and $s' \sqsupseteq h'a'$ belong to different information sets. 
The set of terminal states $\mathcal{L}$ are known as \textit{leaves}. Each leaf is associated with utilities received by each player $u_i(h)$. For a given leaf $h$, the \textit{social welfare} is given by $u_1(h) + u_2(h)$.

We define the set of \textit{sequences} for $P_i$ as the set $\Sigma_i := \{ (I, a) : I \in \mathcal{I}_i, a \in \mathcal{A}(I) \} \cup \{ \varnothing \}$, where $\varnothing$ is known as the \textit{empty sequence}. For any infoset $I_i \in \mathcal{I}_i$, we denote by $\sigma(I)$ the \textit{parent sequence} of $I$, which is defined as the (unique) sequence which precedes $I$ from the root to any node in $I$; if no such sequence exists, then $\sigma(I) = \varnothing$. Sequences in $\Sigma_i$ form a partial order; for sequences $\tau=(I,a), \tau'=(I',a') \in \Sigma_i$, we write $\tau \prec \tau'$ if there exists states $ha$, $h' \in I'$ belonging to $P_i$ such that $ha \sqsubseteq h'$, and write $\tau \preceq \tau'$ if allowing $\tau=\tau'$. If in addition, $\sigma(I') = \tau$, we say that $\tau'$ is an immediate successor of $\tau$ and write $\tau \prec_1 \tau'$. Since the game has no chance, each leaf $h \in \mathcal{L}$ is uniquely identified by a pair of sequences $(\sigma_1, \sigma_2)$. With a slight abuse of notation we write $(\sigma_1, \sigma_2) \in \mathcal{L}$, and denote corresponding player payoffs and social welfare by $u_i(\sigma_1, \sigma_2)$ and $u(\sigma_1, \sigma_2)$.

\paragraph{Sequence-form strategies} In the sequence form, a (mixed) strategy for $P_i$ is compactly represented by a vector $x_i$, indexed by the sequences $\sigma=(I,a) \in \Sigma_i$. The entry $x_i[\sigma]$ contains the \textit{product} of the probabilities of $P_i$ taking actions from the root to information set $I$\footnote{Perfect recall means there is only one such series of actions.}, including $a$ itself, with the base case given by $x_i[\varnothing]=1$. Hence, valid sequence-form strategies must satisfy the `flow' constraints; for every $I \in \mathcal{I}_i$, we have $\sum_{a \in \mathcal{A}(I)} x_i[(I,a)] = \sigma(I)$. Sequence-form strategies have size roughly equal to the number of actions of the player, while flow constraints can be seen as a generalization of the sum-to-one constraints for strategies in the simplex.

\subsection{Extensive-Form Correlated Equilibria} Extensive-form correlated equilibria (EFCE) is a natural extension of CE to EFGs. Unlike regular CEs, players do not receive recommendations for the full game upfront; instead, recommendations are received sequentially, and only for infosets the players are currently in. In the original paper by \citet{von2008extensive}, this is achieved by means of \textit{sealed recommendations}, while \citet{farina2019correlation} have the mediator generating recommendations over the course of the game, but ceasing all future recommendations if a player deviates from a recommendation. We call the recommended actions \textit{trigger sequences} $\sigma^!$ \citep{dudik2012sampling}. Trigger sequences contain the last recommended action from the mediator before any deviation, and implicitly contains information about all previous recommendations (due to perfect recall).
%Recently, [cite martin] propose a generalization of sequential equilibria in terms of \textit{hindsight rationality}, which encapsulates a plethora of EFCE-like equiliria, such as EFCE, its coarse correlated variant, EFCCE [cite], and CFCE [cite martin]. 
EFCEs are \textit{incentive-compatible}, players do not expect to benefit by unilaterally deviating.

\paragraph{Polytope of Correlation Plans} A significant benefit of EFCEs over regular CEs is computational cost: computing a CE that achieves maximum social welfare is NP-complete \cite{von2008extensive}, while in 2-player perfect recall games without chance\footnote{and more generally in games that are triangle-free\cite{farina2020polynomial}}, the constraints that define an EFCE may be expressed in a polynomial number of linear constraints and hence may be solved using a linear program. Crucial to these positive results is a theorem by \citeauthor{von2008extensive} which characterizes $\Xi$, the \textit{polytope of correlation plans} which compactly represents the space of joint (reduced) normal-form strategies up to strategic equivalence. 

\begin{definition}{\textup{(Connected infosets, $I \connected I'$)}} 
Let $I, I'$ be infosets from either player. We say that $I, I'$ are \textit{connected} and write $I \connected I'$ if there exists nodes $u \in I, v \in I'$ in $\game$  lying on a path starting from the root.
\end{definition}

\begin{definition}{\textup{(Relevant sequences, $\sigma_1 \relevant \sigma_2$)}}
Let $\sigma_1 \in \Sigma_1, \sigma_2 \in \Sigma_2$. We say that the sequence pair $(\sigma_1, \sigma_2)$ is relevant, denoted by $\sigma_1 \relevant \sigma_2$ if (i) either $\sigma_1$ or $\sigma_2$ is $\varnothing$ or (ii) $\sigma_1 = (I_1, a_1), \sigma_2 = (I_2, a_2)$ for $I_1 \connected I_2$ and some actions $a_1, a_2$. For convenience, we use the same notation $\sigma_1 \relevant I_2$ when either $\sigma_1 = \varnothing$ or if $\sigma_1 = (I_1, a_1)$ and $I_1 \connected I_2$, with a symmetric definition for $I_1 \connected \sigma_2$.
\end{definition}

\begin{definition}{\textup{(\citeauthor{von2008extensive})}}
Let $\game$ be a perfect recall game without chance. Then, $\Xi$ is a convex polytope of correlation plans which contains non-negative vectors indexed by relevant sequence pairs, with constraints
\begin{align*}
    \Xi := 
    %\begin{array}{l}
    %C, d \\
    %e, F, g
    %  \end{array}
    \left\{
    \bm{\xi} \geq 0: 
    \begin{array}{ccc}
    \bm{\xi} [\varnothing, \varnothing] = 1, \\ 
    \sum_{a \in \mathcal{A}(I)} \xi[(I_1,a), \sigma_2] = \xi[\sigma(I_1), \sigma_2], \\
    \sum_{a \in \mathcal{A}(I)} \xi[\sigma_1, (I_2, a)] = \xi[\sigma_1, \sigma(I_2)]
    \end{array}
    \right\}
\end{align*}
where the second (and third) constraint is over all $I_1 \relevant \sigma_2$ ($\sigma_1 \connected I_2$).
\label{def:xi_polytope}
\end{definition}
Visually, one can view $\Xi$ as a 2-dimensional `checkerboard' of size $|\Sigma_1| \cdot |\Sigma_2|$ with entries to be filled in indices where $\sigma_1 \relevant \sigma_2$. The second and third constraints are simply the sequence-form constraints \cite{von2008extensive} applied to each row and column of the checkerboard. For example, for the game in Figure~\ref{fig:signalling}, all sequence pairs are relevant, and we have row constraints $\xi[\sigma_1, \ell_x] + \xi[\sigma_1, r_x] = \xi[\sigma_1, \varnothing]$ and  $\xi[\sigma_1, \ell_y] + \xi[\sigma_1, r_y] = \xi[\sigma_1, \varnothing]$ for all sequences $\sigma_1 \in \Sigma_1$, and column constraints $\xi[G, \sigma_2] + \xi[B, \sigma_2] = \xi[\varnothing, \sigma_2]$,
$\xi[X_G, \sigma_2] + \xi[Y_G, \sigma_2] = \xi[G, \sigma_2]$, and 
$\xi[X_B, \sigma_2] + \xi[Y_B, \sigma_2] = \xi[B, \sigma_2]$ for all $\sigma_2 \in \Sigma_2$. 
\begin{figure}
\begin{minipage}{0.55\linewidth}
\begin{tikzpicture}
\tikzstyle{solid node}=[circle,draw,inner sep=1.2,fill=black];
\tikzstyle{none}=[];
	%\begin{pgfonlayer}{nodelayer}
		\node [label=left:$1$, style=solid node] (0) at (-11, 7) {};
		\node [label=left:$1$, style=solid node] (1) at (-11.75, 5.75) {};
		\node [label=left:$1$, style=solid node] (2) at (-10.25, 5.75) {};
		\node [style=solid node] (3) at (-12.75, 4.25) {};
		\node [style=solid node] (4) at (-10.5, 4.25) {};
		\node [style=solid node] (5) at (-11.5, 4.25) {};
		\node [style=solid node] (6) at (-9.25, 4.25) {};
		\node [style=none] (9) at (-13, 3.) {};
		\node [style=none] (10) at (-12.5, 3.) {};
		\node [style=none] (11) at (-11.75, 3.) {};
		\node [style=none] (12) at (-11.25, 3.) {};
		\node [style=none] (13) at (-10.75, 3.) {};
		\node [style=none] (14) at (-10.25, 3.) {};
		\node [style=none] (15) at (-9.5, 3.) {};
		\node [style=none] (16) at (-9, 3.) {};
		\node [style=none] (23) at (-11.75, 3.) {};
	%\end{pgfonlayer}
	%\begin{pgfonlayer}{edgelayer}
		\draw (0.center) -- (1.center) node [midway, left] {$G$};
		\draw (0.center) -- (2.center) node [midway, right] {$B$};
		\draw (1.center) -- (3.center) node [midway, left] {$X_G$};
		\draw (1.center) -- (4.center) node [midway, left] {$Y_G$};
		\draw (2.center) -- (5.center) node [midway, right] {$X_B$};
		\draw (2.center) -- (6.center) node [midway, right] {$Y_B$};
		\draw (3.center) -- (9.center) node [below=-0.8mm] (lxL) {\small $\ell_X$};
		\draw (3.center) -- (10.center) node [below] {\small $r_X$};
		\draw (5.center) -- (11.center) node [below=-0.8mm] {\small $\ell_X$};
		\draw (5.center) -- (12.center) node (rxR) [below] {\small $r_X$};
		\draw (4.center) -- (13.center) node (lyL)[below=-0.8mm] {\small $\ell_Y$};
		\draw (4.center) -- (14.center) node [below] {\small $r_Y$};
		\draw (6.center) -- (15.center) node [below=-0.8mm] {\small $\ell_Y$};
		\draw (6.center) -- (16.center) node (ryR) [below] {\small $r_Y$};
	%\end{pgfonlayer}

\draw[rounded corners=7]
($(0)+(-0.5,.25)$)rectangle($(0)+(.5,-0.25)$);

\draw[rounded corners=7]
($(1)+(-0.5,.25)$)rectangle($(1)+(.5,-0.25)$);

\draw[rounded corners=7]
($(2)+(-0.5,.25)$)rectangle($(2)+(.5,-0.25)$);

\draw[rounded corners=7]
($(3)+(-0.25,.25)$)rectangle($(5)+(.25,-0.25)$);
\node (2a) at($(3)!0.5!(5)$){2};

\draw[rounded corners=7]
($(4)+(-0.25,.25)$)rectangle($(6)+(.25,-0.25)$);
\node (2b) at($(4)!0.5!(6)$){2};

% subgames
\draw[dashed, fill=gray,opacity=0.4]
($(lxL)+(-0.2, -0.2)$) rectangle ($(rxR)+(0.2, 1.6)$);

\draw[dashed, fill=gray, opacity=0.7]
($(lyL)+(-0.2, -0.2)$) rectangle ($(ryR)+(0.2, 1.6)$);

\end{tikzpicture}
\end{minipage}
\begin{minipage}{0.40\linewidth}
\begin{tikzpicture}
\small
\tikzstyle{circle node}=[circle,draw,inner sep=0.0,fill=white,text width=3mm, align=center]

% Rows
\node [] (1) at (-0.5, 0) {$\varnothing$};
\node [] (1) at (-0.5, -0.6) {$G$};
\node [] (1) at (-0.5, -1.1) {$B$};
\node [] (1) at (-0.5, -1.7) {$X_G$};
\node [] (1) at (-0.5, -2.2) {$Y_G$};
\node [] (1) at (-0.5, -2.8) {$X_B$};
\node [] (1) at (-0.5, -3.3) {$Y_B$};

%cols
\node [] (1) at (0, 0.5) {$\varnothing$};
\node [] (1) at (0.6, 0.5) {$\ell_x$};
\node [] (1) at (1.1, 0.5) {$r_x$};
\node [] (1) at (1.7, 0.5) {$\ell_y$};
\node [] (1) at (2.2, 0.5) {$r_y$};

% Rectangles
% first row
\draw[]
(-.25, -.25)rectangle(0.25, 0.25) node [pos=0.5,style=circle node] {1};
\draw[pattern=north west lines]
(0.35, -.25)rectangle(1.35, 0.25) node [pos=0.5,style=circle node] {11};
\draw[pattern=north east  lines]
(1.45, -.25)rectangle(2.45, 0.25)node [pos=0.5,style=circle node] {12};

% second row
\draw[]
(-.25, -1.35)rectangle(0.25, -0.35) node [pos=0.5,style=circle node] {2};
\draw[pattern=north west lines]
(0.35, -1.35)rectangle(1.35, -0.35) node [pos=0.5,style=circle node] {9};
\draw[pattern=north east lines]
(1.45, -1.35)rectangle(2.45, -0.35) node [pos=0.5,style=circle node] {10};

% 3rd row
\draw[]
(-.25, -2.45)rectangle(0.25, -1.45) node [pos=0.5,style=circle node] {3};
\draw[pattern=north west lines]
(0.35, -2.45)rectangle(1.35, -1.45) node [pos=0.5,style=circle node] {4};
\draw[pattern=north east lines]
(1.45, -2.45)rectangle(2.45, -1.45)node [pos=0.5,style=circle node] {5};

% 4th row
\draw[]
(-.25, -3.55)rectangle(0.25, -2.55) node [pos=0.5,style=circle node] {6};
\draw[pattern=north west lines]
(0.35, -3.55)rectangle(1.35, -2.55) node [pos=0.5,style=circle node] {7};
\draw[pattern=north east lines]
(1.45, -3.55)rectangle(2.45, -2.55) node [pos=0.5,style=circle node] {8};

% circles

% \node (2a) at($(3)!0.5!(5)$){2};

% Old for alignment
% \node [style=circle node] (0) at (0, 0) {1};
%\node [style=circle node] (1) at (0.6, 0) {4};
%\node [style=circle node] (3) at (1.1, 0) {5};
%\node [style=circle node] (4) at (1.7, 0) {5};
%\node [style=circle node] (5) at (2.2, 0) {5};

% \node [style=circle node] (0) at (0, 0) {1};
%\node [style=circle node] (1) at (0.6, 0) {4};
%\node [style=circle node] (3) at (1.2, 0) {5};
%\node [style=circle node] (4) at (1.8, 0) {5};
%\node [style=circle node] (5) at (2.4, 0) {5};

\end{tikzpicture}
\end{minipage}
\caption{Left: Modified signaling game used in \citep{von2008extensive} with 2 subgames. Right: Correlation plan $\xi$. Circles denote fill-in order under the decomposition of \citet{farina2019efficient}. Dashed rectangles show sequence pairs in different subgames.}%; undashed rectangles are pre-subgame.}
\label{fig:signalling}
\end{figure}
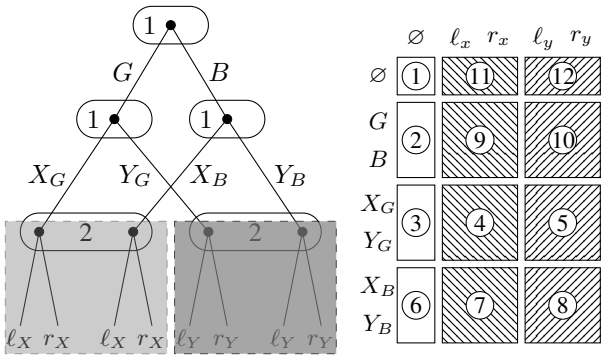
\paragraph{LP-based EFCE solvers.} Observe that $\Xi$ contains a polynomial number of unknowns and linear constraints. A correlation plan in $\Xi$ is a EFCE if it also satisfies incentive constraints that enforce incentive compatibility such that it is optimal for a player to follow the recommendation. \citeauthor{von2008extensive} show that the incentive constraints can be also expressed in a polynomial number of linear constraints over $\bm{\xi}$. Specifically, 
incentive constraints when $\sigma^!=(I,a^!)$ is recommended for $P_1$ (the case for $P_2$ follows naturally) are expressed by\footnote{Readers familiar with the work of \citet{von2008extensive} will notice that we use a slightly different LP. This is to make our future definition of exploitability more convenient.}
{\small
\begin{align}
    \mu(\sigma^!) &\geq 
    \beta(\sigma'; \sigma^!) \qquad
    \sigma' = (I, a'), a' \in \mathcal{A}(I) \backslash \{ a^! \}
    \label{eq:lp_incentive}
    \\
    \mu(\sigma) &= \sum_{\sigma_2; (\sigma, \sigma_2) \in \mathcal{L}} u_1(\sigma, \sigma_2) \xi[\sigma, \sigma_2] + \sum_{\sigma' \succ_1 \sigma} \mu(\sigma') 
    \label{eq:lp_mu}
    \\
    \beta(\sigma_1; \sigma^!) &= 
    \sum_{
    \substack{(\sigma_1, \sigma_2) \\ \in \mathcal{L}}
    }
    u_1(\sigma_1, \sigma_2) \xi[\sigma^!, \sigma_2] + 
     \sum_{\substack{I'; \sigma(I')\\ =\sigma_1}} \nu(I'; \sigma^!)
     \label{eq:lp_beta}
    \\
    \nu(I; \sigma^!) &\geq 
    \beta(\sigma; \sigma^!) \qquad a \in \mathcal{I}(I)
    %\sum_{
    %\substack{(\sigma_1, \sigma_2) \in \mathcal{L} \\ \sigma_1 = (I, a)}
    %}
    %u_1(\sigma_1, \sigma_2) \xi[\tau, \sigma_2] + 
    % \sum_{\substack{I'; \sigma(I')\\ =(I,a)}} \nu(I'; \tau)
    \label{eq:lp_nu}
\end{align}
}
Here, $\mu(\sigma)$ gives the expected utility of $P_1$ if he abides to this and all following recommendations. 
%while $\nu(\sigma; \tau)$ is his expected payoff from playing $\sigma \succeq \tau$ given that $\tau$ was the recommendation deviated from and $P_1$ is best responding. 
Together, \eqref{eq:lp_beta} and \eqref{eq:lp_nu} recursively define the values of the best response of $P_1$ for deviating to $\sigma'$ given $\sigma^!$ was recommended. The term $\xi[\sigma^!, \sigma_2]$ essentially contains the (unnormalized) posterior of $P_2$'s sequence given that $\sigma^!$ was recommended.
\paragraph{Bilinear Saddle-point Problems and Regret Minimization} More recent work by \cite{farina2019correlation,farina2019efficient} show that the problem of finding an EFCE can be formulated as a bilinear saddle point problem, i.e., an optimization problem of the form $\min_{\textbf{x} \in \mathcal{X}} \max_{\textbf{y} \in \mathcal{Y}} \textbf{x}^TA\textbf{y}$. Conceptually, this can be seen a \textit{zero-sum} game between two entities, (i) a \textit{mediator}, who optimizes $\bm{\xi} \in \Xi$, and (ii) a \textit{deviator}, who selects, for each sequence $\sigma^! \in \Sigma_i$, the strategy (for all $\sigma \succ \sigma^!$) that is to be taken after deviating from $\sigma^!$, given the mediator's choice of $\bm{\xi}$. Essentially, the mediator tries to increase the value of $\mu$, while the deviator seeks to increase $\nu$, which makes the inequality in \eqref{eq:lp_incentive} more difficult to achieve. % It is well known in the computational game theory literature that \textit{regret minimization} is amongst the most efficient methods of solving for bilinear saddle point problems. 
\citeauthor{farina2019efficient} characterize $\Xi$ in terms of a series of convexity-preserving operations known as \textit{scaled extensions} and provide a regret minimizer for sets constructed via scaled extensions. %Since minimizing the deviator's regret is similar to the construct used for $2$-player zero-sum EFGs, 
This construction leads to an efficient EFCE solver that runs in linear space, which we adapt in one of our resolving algorithms. 
% formulate a regret minimizer over $\Xi$, by decomposing $\Xi$ into a series of simple operations known as \textit{scaled extensions}, a non-trivial procedure due to the non-hierarchical nature of $\Xi$.
\paragraph{Quality of correlation plans}The quality of any correlation plan $\bm{\xi}$ is measured by (i) its expected \textit{social welfare}, $\sum_{(\sigma_1, \sigma_2) \in \mathcal{L}} \xi[\sigma_1, \sigma_2] u(\sigma_1, \sigma_2)$, where $u$ is typically the payoff sum $u_1+u_2$, and (ii) the degree to which the $\xi$ violates the incentive constraints. 
\begin{definition}{\textup{(Exploitability)}}
Given a trigger sequence $\sigma^!$ of $P_1$, and a strategy $\bm{\xi} \in \Xi$, the value of the best-deviating response to $\sigma^!=(I, a^!)$ is given by 
{\small
\begin{align*}
\beta^*(\bm{\xi};\sigma^!) &= \max_{a \in \mathcal{I} \setminus \{ a^! \}} \beta((I, a), \bm{\xi};\sigma^!) \\ 
\beta(\sigma, \bm{\xi};\sigma^!) &= \sum_{
\substack{\sigma_2:(\sigma, \sigma_2) \in \mathcal{L}}} u_1(\sigma, \sigma_2) \xi[\sigma^!, \sigma_2] +  \sum_{I;\sigma(I)=\sigma} \nu(I, \bm{\xi}; \sigma^!)
\\
\nu(I, \bm{\xi};\sigma^!) &= \max_{a\in\mathcal{I}\{a'\}} \beta((I,a), \bm{\xi}; \sigma^!)
\end{align*}
}
with a similar definition for $P_2$. The exploitability of $\bm{\xi}$ for a trigger sequence $\sigma^!$ is given by 
\begin{align*}
    \delta^*(\bm{\xi}, \sigma^!) = \beta^*(\bm{\xi}; \sigma^!) - \mu(\bm{\xi}, \sigma^!)
\end{align*}
where $\mu(\bm{\xi}, \sigma^!)$ is the value of the $\sigma^!$ if it and all future recommendations are followed, as defined in \eqref{eq:lp_mu}.
\label{def:exploitability}
\end{definition}
$\beta^*(\bm{\xi};\sigma^!)$ is the highest reward a player can get from deviating from the trigger sequence $\sigma^!$, while $\delta^*$ measures the gain from doing so. If $\delta^*(\bm{\xi}; \sigma^!) \leq 0$ for all $\sigma^! \in \mathcal{I}_1 \cup \mathcal{I}_2$, then $\bm{\xi}$ is an EFCE. In LP-based solvers, the social welfare is maximized through the objective function, while exploitability is $\leq 0$ using linear constraints. In the regret minimization method, exploitability is bounded by the average regret incurred by the solvers, which goes to $0$ at a rate of $1/\sqrt{T}$. Maximizing social welfare with regret minimizers typically requires performing binary search.

\subsection{Subgames}
An EFG's tree structure provides a natural means of decomposing the problem of solving a game into smaller subproblems over subtrees. However, in imperfect information games, we will require additional restrictions..
\begin{definition}{\textup{(Subgame)}} Let $\game$ be an EFG with perfect recall. 
Let $H$ be a subset of nodes in $\game$ and $\check{\game}_H$ be the subgraph induced by $H$. We call $\check{\game}_H$ a subgame of $\game$ when:
(i) if state $s\in H$, then $s' \sqsupset s$ implies $s' \in H$, and (ii) for all information sets $I \in \mathcal{I}_1 \cup \mathcal{I}_2$, we have $H \cap I= I$ or $\emptyset$.\footnote{Alternatively if $h \in \check{\game}$ and belongs to some infoset $I$, then all states $h' \in I$ are contained in $\check{\game}$.} 
\label{def:subgame}
\end{definition}
\begin{definition}{\textup{(Subgame Decomposition)}}
Let $\mathcal{H} = \{H_j\}$ be sets of vertices of $\game$. We call $\mathcal{H}$ a valid subgame decomposition if (i) $\mathcal{H}$ contains non-intersecting sets, (ii) each $H_j \in \mathcal{H}$ induces a valid subgame $\check{\game}_{H_j}$ ($\check{\game}_j$ for short).
\end{definition}
% That is, subgames are a collection of subtrees which extend to the end of the game, where infosets must be entirely contained within. 
For this paper, we will assume that we are equipped with a valid subgame decomposition $\mathcal{H}$, which induces $J$ disjoint subgames $\{ \check{\game}_j \}$. There are many possible ways to obtain subgame decomposition, but by far the most natural and common one is based on \textit{public information}. In this paper, we make no additional assumptions on subgames apart from those in the definition. We call nodes that are not included in any $\mathcal{H}$ as \textit{pre-subgame}, with an induced subtree $\hat{\game}$. Note that $\hat{\game}$ obeys property (ii) of a subgame; if some infoset is only partially contained in $\hat{\game}$, then it must be partially contained in some subgame, which is disallowed. Consequently, leaves, infosets, and sequences may be likewise partitioned. We denote these sets by $\check{\mathcal{L}}_j, \hat{\mathcal{L}}$,  $\check{\mathcal{I}}_{i,j}, \hat{\mathcal{I}}_{i}$, and $\check{\Sigma}_{i,j}, \sequencepre$. 
% Sequences $\sigma=(I,a)$ are partitioned based on $I$, with $\varnothing$ belonging $\hat{\game}$. We define 

The game in Figure~\ref{fig:signalling} has two subgames, both starting off with $P_2$ making his move. Here, $P_2$'s infosets belong to separate subgames, while $P_1$'s infosets all lie in $\hat{\game}$. Similarly, all of $P_2$'s non-empty sequences lie in a subgame, while all of $P_1$'s sequences do not. Another valid subgame decomposition is to have all but the root be in a single subgame. %An example of an invalid decomposition would be to place all descendants $G$ and $B$ into different subgames. This would violate condition (ii), since $GX_G$ and $BX_B$ would be in different subgames, despite both belonging to the left infoset of $P_2$.

\section{Subgame Resolving for EFCE}
Subgame resolving exploits the sequential nature of EFGs to refine strategies online. We begin with a \textit{correlation blueprint}, typically a guess or approximate of an EFCE. 
\begin{definition}{\textup{(Correlation blueprint)}}
A correlation blueprint $\blueprint \in \Xi$ for the game $\game$ is an oracle $\blueprintstrat[\sigma_1, \sigma_2]$ which can be accessed in constant time for all $\sigma_1 \relevant \sigma_2$.
\end{definition}
Note that blueprint strategies $\blueprintstrat$ may not necessarily be stored explicitly: all we require is that its entries may be accessed efficiently. For example, a blueprint may have players playing independently according to sequence form strategies $\xi^{(i)}(\sigma)$, such that $\blueprintstrat[\sigma_1, \sigma_2] = \xi^{(1)}(\sigma_1) \cdot \xi^{(2)}(\sigma_2)$ (no correlation between players' actions in this special blueprint). % In the Battleship benchmark of \citet{farina2019correlation}, the blueprint could recommend players to place ships and fire uniformly at random. For games that are close to zero-sum (e.g., Goofspiel), we can solve the zero-sum variant efficiently and then report player independent strategies. 

At the beginning of the game, players receive recommendations from the blueprint strategy. Once the game enters a subgame, an equilibrium refinement step is performed \textit{only for that subgame entered}, and recommended actions are instead drawn from that refined correlation plan for the rest of the game. Subgame resolving is an \textit{online} method; instead of solving for the equilibrium upfront, it defers part of its computation to when the game is being played. A generic algorithm is shown in Algorithm~\ref{alg:generic_sr}. 
%This paper, similar to prior work \citep{brown2017safe,ling2021safe} is to perform resolving (i) efficiently; ideally, without having to concern ourselves with subgames which we did not visit in this playthrough, and (ii) safely, such that the refined solution is no worse than the blueprint, as measured by social welfare and exploitability.

\paragraph{Refinements of correlation blueprint.} Subgame resolving for EFCEs differs significantly from prior work for zero-sum and Stackelberg games. This is because we are now updating relevant sequence pairs of $\blueprint$ in the correlation polytope $\Xi$, which unlike the space of sequence form strategies, has no obvious hierarchical structure. Fortunately, Definitions~\ref{def:xi_polytope} and \ref{def:subgame} provide enough structure to perform resolving.
\begin{theorem}{\textup{(Independence between subgames)}}
Let the set $S_j$ contain relevant sequences (i) $\sigma_1, \sigma_2 \in \check{\game}_j$, or (ii) $\sigma_1 \in \hat{\game}, \sigma_2 \in \check{\game}_j$, or (iii) $\sigma_1 \in \check{\game}_j, \sigma_2 \in \hat{\game}$. Let $S_0$ be the set of relevant sequence pairs such that $\sigma_1, \sigma_2 \in \hat{\game}$. Then $\{ S_0,\cdots,S_J \}$ forms a partition of relevant sequence pairs.
%Consider the following $J+1$ subsets.
%The set of all relevant sequence pairs $(\sigma_1, \sigma_2)$ can be split into $J+1$ partitions, (i) $\sigma_1, \sigma_2 \in \hat{\game}$, (ii) $\sigma_1, \sigma_2 \in \check{\game}_j$ for some $j$, (iii) $\sigma_1 \in \hat{\game}, \sigma_2 \in \check{\game}_j$, (iv) $\sigma_1 \in \check{\game}_j, \sigma_2 \in \hat{\game}$.
\label{thm:independence_subgames}
\end{theorem}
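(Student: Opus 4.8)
The plan is to establish the two requirements of a partition --- pairwise disjointness of $S_0,\dots,S_J$ and coverage of all relevant sequence pairs --- after first recording how the individual sequences are split. Because $\hat{\game}$ and the subgames $\check{\game}_1,\dots,\check{\game}_J$ partition the nodes of $\game$ and, by property (ii) of Definition~\ref{def:subgame}, each information set is contained entirely in one of these pieces, the sequence set of each player decomposes as $\Sigma_i = \sequencepre \sqcup \bigsqcup_{j=1}^{J}\check{\Sigma}_{i,j}$, with the empty sequence placed in $\sequencepre$ (the root is pre-subgame). Consequently every sequence pair $(\sigma_1,\sigma_2)$ --- relevant or not --- lies in exactly one ``cell'' determined by which of these $J+1$ pieces contains $\sigma_1$ and which contains $\sigma_2$. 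In this language $S_0$ is the set of relevant pairs in cell $(\hat{\game},\hat{\game})$, and $S_j$ is the set of relevant pairs in the three cells $(\check{\game}_j,\check{\game}_j)$, $(\hat{\game},\check{\game}_j)$ and $(\check{\game}_j,\hat{\game})$.

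Given this, pairwise disjointness is immediate: distinct $S_k$ are restrictions to relevant pairs of disjoint unions of cells, and this uses only that the node partition --- hence the induced sequence partition --- is disjoint. The real content is coverage: I must show that the cells not collected by any $S_k$, namely the ``cross'' cells $(\check{\game}_j,\check{\game}_{j'})$ with $j\neq j'$, contain no relevant pair at all. (Any pair with $\sigma_1=\varnothing$ or $\sigma_2=\varnothing$ already lies in a cell of the form $(\hat{\game},\cdot)$ or $(\cdot,\hat{\game})$, hence is covered, so only pairs of two non-empty sequences remain to be considered.)

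To rule out the cross cells, suppose $\sigma_1=(I_1,a_1)\in\check{\Sigma}_{1,j}$, $\sigma_2=(I_2,a_2)\in\check{\Sigma}_{2,j'}$ and $\sigma_1\relevant\sigma_2$. For non-empty sequences relevance means $I_1\connected I_2$, so there exist nodes $u\in I_1$ and $v\in I_2$ lying on a common path from the root; since $I_1$ is $P_1$'s and $I_2$ is $P_2$'s, $u$ and $v$ are distinct, so one strictly precedes the other --- say $u\sqsubset v$, the case $v\sqsubset u$ being symmetric. By property (ii) of Definition~\ref{def:subgame}, $I_1\subseteq H_j$ and $I_2\subseteq H_{j'}$, hence $u\in H_j$ and $v\in H_{j'}$. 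Now property (i) of Definition~\ref{def:subgame} applied to $\check{\game}_j$ --- a descendant of a node in $H_j$ is again in $H_j$ --- together with $u\sqsubset v$ yields $v\in H_j$. Therefore $v\in H_j\cap H_{j'}$, and because a valid subgame decomposition consists of non-intersecting sets, $j=j'$, contradicting $j\neq j'$. This downward-propagation argument, which is the only place the defining properties of subgames and their mutual disjointness enter, is the one step I expect to require care; the rest is bookkeeping over the cell decomposition.

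Combining the pieces: every relevant pair sits in some cell, the cross cells are empty, and the remaining cells are exactly those collected --- without overlap --- by $S_0,\dots,S_J$. Hence $\{S_0,\dots,S_J\}$ is a pairwise-disjoint cover of the relevant sequence pairs, i.e.\ a partition.
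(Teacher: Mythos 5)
Your proposal is correct and follows essentially the same route as the paper: disjointness is read off the node (hence sequence) partition, and the only substantive step is ruling out a relevant pair $(\sigma_1,\sigma_2)$ with $\sigma_1\in\check{\game}_j$, $\sigma_2\in\check{\game}_k$, $j\neq k$, by taking nodes $u\in I_1$, $v\in I_2$ on a common root path and using the descendant-closure property (i) of subgames to force both into the same subgame, contradicting disjointness of the decomposition. Your write-up just spells out the cell bookkeeping and the handling of empty sequences that the paper's terser proof leaves implicit.
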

A relevant sequence pair $(\sigma_1, \sigma_2)$ is \textit{pre-subgame}, written $(\sigma_1, \sigma_2) \in \hat{\game}$ if $(\sigma_1, \sigma_2) \in S_0$ and $(\sigma_1, \sigma_2) \in \check{\game}_j$ if $(\sigma_1, \sigma_2) \in S_j$. Theorem~\ref{thm:independence_subgames} shows exactly one of these must hold.
\begin{definition}{\textup{(Refinements)}}
For a given blueprint $\blueprint \in \Xi$ and subgame decomposition $\mathcal{H}$, a correlation plan $\tilde{\bm{\xi}} \in \Xi$ is called a complete refinement if $\tilde{\xi}[\sigma_1, \sigma_2] = \blueprintstrat[\sigma_1, \sigma_2]$ for all $(\sigma_1, \sigma_2) \in \hat{\game}$.
Let $\polytopesub$ be $\Xi$ but restricted to sequence pairs $(\sigma_1, \sigma_2) \in \check{\game}_j \cup \hat{\game}$. 
We call $\plansub \in \polytopesub$ a refinement of subgame $j$ if $\plansubstrat[\sigma_1, \sigma_2] = \blueprintstrat[\sigma_1, \sigma_2]$ for all $(\sigma_1, \sigma_2) \in \hat{\game}$. 
\end{definition}
For example, in Figure~\ref{fig:signalling}, a complete refinement involves updating all but the first column, since for $P_2$, all but the empty sequence is in some subgame. For the left subgame, we have $\polytopesub$ being the first 3 columns; finding a refinement involves updating the columns containing $\ell_x, r_x$ (sequences which are contained in the subgame) and dropping the last $2$ columns, while respecting the constraints in Definition~\ref{def:xi_polytope}. 

Theorem~\ref{thm:independence_subgames} implies that updated entries (shaded entries) for each refinement do not overlap, hence, refined correlation plans can be combined to form complete refinements. Let $\{ \plansub \}$ contain a refinement for each subgame. Then, $\{ \plansub \}$ \textit{induces} a complete refinement naturally, $\tilde{\xi}[\sigma_1, \sigma_2] = \blueprintstrat[\sigma_1, \sigma_2]$ if $(\sigma_1, \sigma_2) \in \hat{\game}$ and $\plansubstrat[\sigma_1, \sigma_2]$ if $(\sigma_1, \sigma_2) \in \check{\game}_j$. This direct mapping satisfies $\tilde{\bm{\xi}} \in \Xi$, as no constraint of $\Xi$ involves sequences pairs belonging to  different subgames. 

The independence property of sequence pairs extends to EFCE incentive constraints for trigger sequences within subgames. For every trigger sequence $\sigma^!$ in $\check{\game}_j$, the best-deviating response (see Definition~\ref{def:exploitability}) will never have to reference sequence pairs containing any sequence outside $\check{\game}_j$.\footnote{This is intuitively true. Once inside $\check{\game}_j$, a potential deviating player will never encounter states outside of $\check{\game}_j$ in the future, and hence need not consider them.} These show it may be possible to perform refinements of subgames \textit{independently} without solving other entries containing sequences from other subgames. However, independence of incentive constraints does not apply to pre-subgame trigger sequences $\sigma^! \in \sequencepre$. For those sequences, $\delta^*(\bm{\xi}, \sigma^!)$ will in general depend on refined solutions from multiple, distinct subgames. Handling these constraints is a primary challenge addressed in this paper.

\begin{algorithm}[tb]
\caption{Subgame Resolving}
\label{alg:generic_sr}
\textbf{Input}: EFG, blueprint $\blueprint$
% \textbf{Parameter}: Optional list of parameters\\
% \textbf{Output}: Your algorithm's output
\begin{algorithmic}[1] %[1] enables line numbers
% \STATE Let $t=0$.
\WHILE{game is not over}
%\STATE Do some action.
\IF {currently in some subgame $j$}
\IF {first time in subgame }
    \STATE (*) Refine $\blueprint \rightarrow \plansub$
\ENDIF
\STATE Recommend action according to $\plansub$
\ELSE
\STATE Recommend action according to $\blueprint$
\ENDIF
\ENDWHILE
\end{algorithmic}
\end{algorithm}
\paragraph{Safe refining algorithms} 
An important property when performing subgame-resolving for independent, uncorrelated strategies is that of safety, and was the central issue discussed extensively in solving NE in zero-sum games~\cite{brown2017safe}. There, it was observed naive application of resolving algorithms can result in solutions which are of lower quality than the blueprint. The fundamental problem is that when $P_1$ performed resolving, the best-response of $P_2$ in the pre-subgame portion differs from the blueprint, hence whatever initial distribution over states at the beginning of the subgame no longer holds. This phenomenon is known as \textit{unsafe} resolving. A similar phenomenon quantified in terms of exploitability holds for EFCEs.  %Hence, the distribution over initial states in subgames would no longer be the same, meaning that whatever distribution that was assumed to be true at the beginning of the subgame no longer holds. 
\begin{definition}{\textup{(Safe refinements)}}
A complete refinement $\tilde{\bm{\xi}}$ of $\blueprint$ is safe if for all trigger sequences $\sigma^!$
\begin{align*}
    \delta^* (\tilde{\bm{\xi}}; \sigma^!)  \leq \max \left(0,  \delta^* 
    (\blueprint; \sigma^!) \right),
\end{align*}
i.e., the exploitability of $\tilde{\bm{\xi}}$ for all $\sigma^!$ is $0$ or less than the blueprint. We say that $\tilde{\bm{\xi}}$ is fully safe if in addition, the social welfare (assuming no deviations) under $\tilde{\bm{\xi}}$ is no less than $\blueprint$. A resolving algorithm is said to be (fully) safe if the complete refinement induced by all $j$ refinements $\plansub$ is (fully) safe.
\label{def:safe_resolving}
\end{definition}
In safe refinements, players are at least as incentivized to follow the resolved strategy than the blueprint. Fully safe refinements ensures further that the social welfare will not be diminished. Apart from incurring additional computing costs, there can be no harm in applying fully safe resolving. Clearly, a fully safe resolving algorithm exists in the form of one that trivially returns the blueprint. 

\paragraph{Resolving with multiple subgames.} In Definition~\ref{def:safe_resolving}, we required that the induced complete refinement $\tilde{\bm{\xi}}$ be used to measure safety, and not just the refined strategy of a subgame $\plansub$. This may seem odd at first, since the primary advantages of resolving was that it did not require computing strategies for subgames not reached in actual play. However, it turns out that this is necessary. Consider the perspective of $P_i$ who in the pre-subgame portion of $\game$ was recommended a sequence $\sigma^!$ and was considering deviation. At that point of decision making, $P_i$ does not know which subgame will be reached in the future; however, he knows that whichever subgame is encountered (if at all), refinement will be performed. Thus, when contemplating deviation, $P_i$ in fact computes the value of a best-deviating response to the complete refinement $\tilde{\bm{\xi}}$. This is despite the fact that in a single playthrough of the game, at most one subgame can be encountered in reality. Another interpretation is that the mediator publishes the refinement \textit{algorithm} which implicitly defines the complete  $\tilde{\bm{\xi}}$, which players contemplate best responses to. Hence, even though the resolving algorithm does not explicitly compute a complete refinement, it should still guarantee safety \textit{as if} it did.

\section{Safe Subgame Resolving Using LPs} \label{sec:resolve_lp}
Suppose the mediator has thus far given recommendations based on $\blueprint$ and the players have just entered subgame $j$. Following Algorithm~\ref{alg:generic_sr}, the mediator computes a refinement $\plansubstrat$ which he uses for all future recommendations. We now present a safe refinement algorithm using a LP (a fully safe variant will be discussed later). 

On top of the structural constraints of $\polytopesub$, we have 3 categories (A-C) of additional constraints that ensure safety. Constraint set (A) enforces safety for trigger sequences $\sigma^! \in \check{\game}_j$, in an manner identical to \eqref{eq:lp_incentive}, while constraint sets (B-C) ensures that the complete refinement $\tilde{\bm{\xi}}$ is safe; loosely speaking, (B) contains lower bounds that ensure that following recommendations will yield a high enough payoff to a player contemplating deviation, while (C) contains upper bounds which ensure that players which have deviated do not get rewarded too much. 

\paragraph{(A) Safety for in-subgame triggers} 
%, we enforce safety for trigger sequences in $\check{\game}_j$. 
For each $P_i$ and each sequence in subgame $j$, i.e., $\sigma^! = (I^!, a^!) \in \check{\Sigma}_{i,j}$, we require% for all deviating sequences $\sigma' = (I^!, a'), a' \in \mathcal{A}(I) \backslash \{ a^! \}$,
{\small
\begin{align}
    \mu(\tilde{\bm{\xi}}; \sigma^!) \geq \beta^*(\tilde{\bm{\xi}}; \sigma^!) - \delta^*(\blueprint; \sigma^!) \label{eq:incentive_inside_subgame}
\end{align}
}
where the $\mu, \nu$ have constraints identical to \eqref{eq:lp_mu}, \eqref{eq:lp_nu}. These constraints only involve sequence (pairs) that lie within $\check{\game}_j$ and not other subgames, so no modifications are needed. 

% For each $\sigma^! \in \hat{\game}$, let $\alpha=\max(0, \delta^*(\blueprint; \sigma^!)$. If $\alpha=$

\paragraph{Computing safe infoset value bounds} Now we turn to constraints (B) and (C), which guarantee safety for trigger sequences in $\hat{\game}$. Our approach is to, for each trigger-sequence $\sigma^!=(I,a^!)$,  generate a set of linear constraints which guarantee that the safety for $\sigma^!$ is satisfied, in accordance to Definition~\ref{def:safe_resolving}. %Let us consider the amount of "wriggle room" that $\tilde{\bm{\xi}}$ has relative to $\blueprint$, while retaining safety.
What are some sufficient conditions on $\mu(\tilde{\bm{\xi}}; \sigma^!)$ and $\beta(\sigma', \tilde{\bm{\xi}}; \sigma^!)$ such that the safety condition in Definition~\ref{def:safe_resolving} is satisfied for $\sigma^!$?
To answer this, let us consider $\alpha=\max(0, \delta^*(\blueprint; \sigma^!))$. 
There are $2$ cases. (i) If $\alpha=0$, then the blueprint was already sufficiently unexploitable for $\sigma^!$. Thus we could afford to decrease $\mu(\tilde{\bm{\xi}}; \sigma^!)$ and increase $\beta(\sigma', \tilde{\bm{\xi}}; \sigma^!)$ relative to the blueprint---if it leads to better social welfare. (ii) If $\alpha \geq 0$, then $\sigma^!$ was exploitable and we do not want to worsen exploitability. This can be avoided if we could somehow ensure $\mu(\bm{\xi}, \sigma^!)$ and $\beta(\sigma', \tilde{\bm{\xi}}; \sigma^!)$ do not decrease or increase respectively. 
Concretely, in case (i), we can require $\beta^*(\tilde{\bm{\xi}}; \sigma^!) \leq \invbreve{\beta}(\sigma; \sigma^!) =  \beta^*(\blueprint; \sigma^!) - \delta^*(\blueprint; \sigma^!)/2$, and $\mu(\tilde{\bm{\xi}}; \sigma^!) \geq \breve{\mu}(\sigma; \sigma^!) = \mu(\blueprint, \sigma^!) + \delta^*(\blueprint; \sigma^!)/2$. 
In case (ii), we can require $\beta^*(\tilde{\bm{\xi}}; \sigma^!) \leq \invbreve{\beta}(\sigma; \sigma^!) =  \beta^*(\blueprint; \sigma^!)$, and $\mu(\tilde{\bm{\xi}}; \sigma^!) \geq \breve{\mu}(\sigma; \sigma^!) = \mu(\blueprint, \sigma^!)$. These are sufficient conditions to guarantee that safety is maintained for $\sigma^!$. Yet, enforcing this is not possible, since $\mu(\tilde{\bm{\xi}}; \sigma^!)$ and $\beta(\sigma', \tilde{\bm{\xi}}; \sigma^!)$ can depend on relevant sequence pairs belonging to other subgames. The trick is to recursively unroll $\mu$ and $\beta$, maintaining bounds which guarantee for safety at each step. This is repeated until we reach infosets belonging to subgames. 

% we set $\invbreve{T} = \breve{T} = \delta^*(\blueprint; \sigma^!)/2$, while in the latter, we set $\invbreve{T}=\breve{T}=0$. We call these slacks, which represent the degree which values can be increased (or decreased) without affecting safety. Our goal is to propagate these slacks downwards, recursively unrolling $\beta^*(\blueprint;\sigma^!)$ and $\mu(\blueprint; \sigma^!)$ until we encounter an \textit{head infoset}.
% These involve lower bounds on $\mu(\sigma^!)$ and upper bounds on $\nu((I,a), \sigma^!)$. 
\begin{definition}{\textup{(Head infosets)}} 
For a subgame $j$, $I \in \mathcal{I}_i$ is a head infoset of subgame $j$ if $I \in \hat{\mathcal{I}}_{i,j}$ and there does not exist $I' \prec I$ such that $I' \not\in \hat{\mathcal{I}}_i$. 
The set of head infosets for player $i$ in subgame $j$ is denoted by $\mathcal{I}_{i,j}^{\text{h}} \subseteq \check{\mathcal{I}}_{i,j}$. 
$I$ is called a head infoset if it is a head infoset of some subgame.
\end{definition}
\paragraph{(B) Lower bounds on $\mu(\tilde{\bm{\xi}}, \sigma^!)$} 
Recall that $\mu(\tilde{\bm{\xi}}, \sigma^!)$ is the expected utility accrued from leaves $(\sigma_1, \sigma_2) \in \mathcal{L}$, where $\sigma_1 \succeq \sigma^!$. We can recursively decompose $\mu(\tilde{\bm{\xi}}, \sigma^!)$ into values of infosets, sequences and their summations. 
{
\small
\begin{align}
    d(\sigma; \sigma^!) &=\left(\mu(\blueprint, \sigma) - \breve{\mu}(\sigma; \sigma^!)\right)/\big| \{ I | \sigma(I) = \sigma \} \big| \label{eq:slack_mu}
    \\
    \breve{v}(I; \sigma^!) &= v(I) - d(\sigma(I); \sigma^!) \label{eq:lower_bound} \\ 
    f(I;\sigma^!) &= (v(\blueprint, I) - \breve{v}(I; \sigma^!))/\big | \mathcal{A}(I) \big| \label{eq:lb_slack_infoset}\\
    \breve{\mu} ( \sigma; \sigma^!) &= \mu(\blueprint, \sigma) - f(I;\sigma^!) \qquad I:\sigma=(I,a) \label{eq:lb_mu}
\end{align}
}
where $v(\blueprint, I) = \sum_{\sigma:(I,a), a\in\mathcal{A}(I)}{\mu(\blueprint}, \sigma)$, For every $\sigma$, starting from $\sigma^!$, we compute in \eqref{eq:slack_mu} the \textit{slack}, i.e., the difference between our desired lower bound $\breve{\mu}(\sigma)$ and what was achieved with the blueprint. In \eqref{eq:lower_bound}, this slack is split equally between all infosets which have $\sigma$ as the parent sequence. A similar process is repeated for infosets in \eqref{eq:lb_slack_infoset} and \eqref{eq:lb_mu}. We alternate between computing lower bounds for sequences and infosets until we have computed $\breve{v}(I)$ for $I \in \mathcal{I}^{\text{head}}_{i}$ in \eqref{eq:lower_bound}. We repeat this for all $\sigma^! \in \hat{\game}$ and take the tighter of the bounds to obtain $\breve{v}(I^{\text{head}})$ for all $I^{\text{head}} \in \mathcal{I}^{\text{head}}_{i}$.
\iffalse
Given a refined strategy $\tilde{\bm{\xi}} \in \Xi$, the value of a head infoset of $P_i$, $I \in \mathcal{I}^\text{head}_{i,j}$ is given by 
\begin{align*}
    v(\tilde{\bm{\xi}}; I^{\text{head}}) 
    % &= \sum_{a \in \mathcal{A}(I^{\text{head}})}
    %\sum_{
    %\substack{
    %(\sigma_1, \sigma_2) \in \mathcal{L} \\
    %\sigma_i \succeq (I, a)
    %}
    %}
    %u_i(\sigma_1, \sigma_2) \cdot \tilde{\bm{\xi}}[\sigma_1, \sigma_2] \\
    &= \sum_{a \in \mathcal{A}(I^{\text{head}})}
    \mu(\tilde{\bm{\xi}}, (I^{\text{head}}, a)).
\end{align*} 
Now, observe that $\mu(\tilde{\bm{\xi}}, \sigma^!)$, for $\sigma^! \in \sequencepre$ can be decomposed additively into rewards across subgames. % partitioned over all $\check{\mathcal{L}_j}$ and $\hat{\mathcal{L}}$
{
\begin{align*}
    \mu(\tilde{\bm{\xi}}, \sigma^!) &= 
    %\sum_{L\in \{\hat{\mathcal{L}} \} \cup 
    %\{ \check{\mathcal{L}}_j \}
    %} \sum_{
    %\substack{(\sigma_1, \sigma_2) \in L\\
    %\sigma_i \succ \sigma^!}}
    %u_i(\sigma_1, \sigma_2)\cdot \tilde{\bm{\xi}}[\sigma_1, \sigma_2] \\
    %&= 
    \sum_{j \in [J]} \sum_{
    \substack{
    I \in \mathcal{I}^{\text{head}}_{i,j}; 
    \sigma(I) \succ \sigma^!
    }
    }
    v(\tilde{\bm{\xi}}; I^{\text{head}})
     + c(\sigma^!),
\end{align*}
}
where $c(\sigma^!)$ is the constant payoff accrued from in $\hat{\game}$. If each of $v(\tilde{\bm{\xi}}; I^{\text{head}})$ is lower bounded, then $\mu(\tilde{\bm{\xi}}; \sigma^!)$ is lower bounded. % In our implementation, we choose to lower bound the value of each inner summation (i.e., the payoff that $P_i$ recieved from any particular head infoset $I$). This gives us bounds of the form $I_$
\fi
\paragraph{(C) Upper bounds on $\beta^*(\tilde{\bm{\xi}}; \sigma^!)$} Recall that $\beta^*$ is the value of the best-deviating response. We can unroll the inequalities using Definition~\ref{def:exploitability} and stop once a head infoset is reached, i.e., when we encounter a term $\nu(I, \tilde{\bm{\xi}}; \sigma^!)$ for some $I\in \mathcal{I}^{\text{head}}_{i}$. If these terms were upper-bounded appropriately, then $\beta^*(\tilde{\bm{\xi}}; \sigma^!)$ would be upper-bounded.
One possible way is to compute upper bounds recursively
{
\small 
\begin{align}
s(\sigma; \sigma^!) &= \left( \invbreve{\beta}(\sigma; \sigma^!) - \beta(\sigma,\blueprint; \sigma^!)\right) /  \big| \{ I | \sigma(I) = \sigma \} \big|\\
\invbreve{\nu}(I; \sigma^!) &= 
\nu(I, \blueprint; \sigma^!) + s  %\qquad I \in \mathcal{I}_i^{\text{head}}
\label{eq:upper_bound}
\\
\invbreve{\beta}(\sigma; \sigma^!) &= \invbreve{\nu}(I; \sigma^!).
% \qquad \forall \sigma = (I, a), a \in \mathcal{A}(I).
\end{align}
}
At the end of the bounds computation step, we have sets
$\invbreve{\mathcal{B}}_{i, j} = \{ 
(I, \sigma^!, \invbreve{\nu}(I; \sigma^!)
\}$
and 
$\breve{\mathcal{B}}_{i, j} = 
\{ 
(I, \breve{v}(I))
\}
$,
%\begin{align}
%    \invbreve{\mathcal{B}}_j = 
%    \{ 
%    (I, \sigma^!, \invbreve{\nu}(I; \sigma^!)
%    \},
%    \qquad 
%    \breve{\mathcal{B}}_j = 
%    \{ 
%    (I, \breve{v}(I))
%    \}
%\end{align}
containing constraints of the form $v(\tilde{\bm{\xi}}; I^{\text{head}}) \geq \breve{v}(I^{\text{head}})$ and 
$\nu(I^{\text{head}}, \tilde{\bm{\xi}}; \sigma^!) \leq \invbreve{\nu}(I; \sigma^!)$
for some $I \in \mathcal{I}^{\text{head}}_{i, j}$.
\begin{theorem}
If $\tilde{\bm{\xi}}$ satisfies all constraints in $\breve{\mathcal{B}}_{i, j}$ and $\invbreve{\mathcal{B}}_{i, j}$ for all $i \in \{1, 2 \}$ and $j \in [J]$, then     $\delta^* (\tilde{\bm{\xi}}; \sigma^!)  \leq \max \left(0,  \delta^* 
    (\blueprint; \sigma^!) \right)$ for all $\sigma^! \in \hat{\game}$.
    \label{thm:satisfies_bounds}
\end{theorem}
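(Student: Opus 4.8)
The plan is to fix an arbitrary pre-subgame trigger $\sigma^!\in\hat{\game}$, write $\sigma^!=(I,a^!)$, and set $\alpha:=\max\!\big(0,\delta^*(\blueprint;\sigma^!)\big)$; the goal is $\delta^*(\tilde{\bm{\xi}};\sigma^!)\le\alpha$. Since $\delta^*(\tilde{\bm{\xi}};\sigma^!)=\beta^*(\tilde{\bm{\xi}};\sigma^!)-\mu(\tilde{\bm{\xi}},\sigma^!)$ by Definition~\ref{def:exploitability}, it suffices to prove the two one-sided bounds $\mu(\tilde{\bm{\xi}},\sigma^!)\ge\breve{\mu}(\sigma^!;\sigma^!)$ (call it (L)) and $\beta^*(\tilde{\bm{\xi}};\sigma^!)\le\invbreve{\beta}(\sigma^!;\sigma^!)$ (call it (U)), together with the identity $\invbreve{\beta}(\sigma^!;\sigma^!)-\breve{\mu}(\sigma^!;\sigma^!)=\alpha$. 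The identity is a one-line check from the two-case definitions in the text and the relation $\delta^*(\blueprint;\sigma^!)=\beta^*(\blueprint;\sigma^!)-\mu(\blueprint,\sigma^!)$: when $\alpha=0$ the two $\delta^*(\blueprint;\sigma^!)/2$ corrections combine to cancel $\beta^*(\blueprint;\sigma^!)-\mu(\blueprint,\sigma^!)=\delta^*(\blueprint;\sigma^!)$, and when $\alpha>0$ both quantities equal their blueprint values so the difference is exactly $\delta^*(\blueprint;\sigma^!)=\alpha$. Granting (L), (U) and the identity, $\delta^*(\tilde{\bm{\xi}};\sigma^!)\le\invbreve{\beta}(\sigma^!;\sigma^!)-\breve{\mu}(\sigma^!;\sigma^!)=\alpha$, which is the claim for this $\sigma^!$; since $\sigma^!$ ranges over all of $\hat{\game}$, this proves the theorem.

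For (L) I would unroll $\mu(\tilde{\bm{\xi}},\sigma^!)$ with \eqref{eq:lp_mu}, alternating the sequence-level expansion ($\mu$ of a sequence equals its immediate leaf reward plus the values $v$ of its child infosets) with the infoset-level expansion ($v$ of an infoset equals $\sum_a\mu$ of its actions), and match it step for step against the target recursion \eqref{eq:slack_mu}--\eqref{eq:lb_mu}. Concretely, I would prove by induction on the reversed recursion tree the paired invariants $v(\tilde{\bm{\xi}};I)\ge\breve{v}(I;\sigma^!)$ for every $P_i$-infoset $I$ visited and $\mu(\tilde{\bm{\xi}},\sigma)\ge\breve{\mu}(\sigma;\sigma^!)$ for every $P_i$-sequence $\sigma$ visited. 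The base case is a head infoset $I^{\text{head}}\in\mathcal{I}^{\text{head}}_{i,j}$, where the invariant is precisely the membership of $(I^{\text{head}},\breve{v}(I^{\text{head}}))$ in $\breve{\mathcal{B}}_{i,j}$, a constraint $\tilde{\bm{\xi}}$ satisfies by hypothesis (and which, being the pointwise tightest of the bounds produced by all triggers, remains valid for this $\sigma^!$). The step from actions to an infoset is immediate: $v(\tilde{\bm{\xi}};I)=\sum_a\mu(\tilde{\bm{\xi}},(I,a))\ge\sum_a\breve{\mu}((I,a);\sigma^!)=v(\blueprint,I)-|\mathcal{A}(I)|\,f(I;\sigma^!)=\breve{v}(I;\sigma^!)$ by \eqref{eq:lb_slack_infoset}--\eqref{eq:lb_mu}. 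The step from child infosets to a sequence is where the slack arithmetic matters: writing $\mu(\blueprint,\sigma)=L_\sigma+\sum_{I:\sigma(I)=\sigma}v(\blueprint,I)$ with $L_\sigma$ the immediate-leaf term, the inductive hypothesis gives $\mu(\tilde{\bm{\xi}},\sigma)\ge\tilde{L}_\sigma+\sum_I v(\blueprint,I)-|\{I:\sigma(I)=\sigma\}|\,d(\sigma;\sigma^!)$, and \eqref{eq:slack_mu} rewrites the right-hand side as $\tilde L_\sigma-L_\sigma+\breve{\mu}(\sigma;\sigma^!)$; so the step goes through provided the refined immediate-leaf term $\tilde L_\sigma$ is no smaller than $L_\sigma$, which is automatic when every leaf $(\sigma,\sigma_2)$ lies in $\hat{\game}$ and is otherwise guaranteed by the within-subgame bound recorded where that branch of the recursion meets a subgame boundary. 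Propagating the invariant up to $\sigma^!$ yields (L).

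The argument for (U) is the mirror image: unroll $\beta^*(\tilde{\bm{\xi}};\sigma^!)$ using Definition~\ref{def:exploitability}, each step being $\nu(I,\cdot)=\max_a\beta((I,a),\cdot)$ and $\beta(\sigma,\cdot)=(\text{leaf terms weighted by the posterior row }\tilde{\xi}[\sigma^!,\cdot])+\sum_I\nu(I,\cdot)$, compare against \eqref{eq:upper_bound} and its companions, and induct on the reversed tree with invariants $\nu(I,\tilde{\bm{\xi}};\sigma^!)\le\invbreve{\nu}(I;\sigma^!)$ and $\beta(\sigma,\tilde{\bm{\xi}};\sigma^!)\le\invbreve{\beta}(\sigma;\sigma^!)$; monotonicity of $\max$ plus the exact allocation of slack through $s$ carry the invariant up to $\sigma^!$, the base case being supplied by $\invbreve{\mathcal{B}}_{i,j}$ (and, again, by within-subgame bounds at the other boundary points, now controlling that the refined leaf terms move downward). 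Combining (L) and (U) with the identity from the first paragraph closes the proof.

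The step I expect to be the main obstacle — and the place where this differs from the zero-sum case — is the bookkeeping at the subgame boundaries. For a pre-subgame trigger, $\mu(\tilde{\bm{\xi}},\sigma^!)$ and $\beta^*(\tilde{\bm{\xi}};\sigma^!)$ genuinely depend on refined entries from several subgames at once: through the posterior row $\tilde{\xi}[\sigma^!,\cdot]$ spanning columns in distinct subgames, and through leaves $(\sigma_1,\sigma_2)$ with $\sigma_1\in\hat{\game}$ but $\sigma_2$ inside a subgame. The crux is therefore to verify that the recursively defined slacks $d,f$ (and $s$ on the upper-bound side) form an \emph{exact} credit scheme: each splitting step hands out precisely the slack currently available, with nothing lost and nothing spent twice, so that enforcing only the leaves of the recursion — the constraints in $\breve{\mathcal{B}}_{i,j}$ and $\invbreve{\mathcal{B}}_{i,j}$ — already implies (L) and (U); and, simultaneously, that every such boundary constraint references only sequence pairs of a single subgame, so it can be delegated to that subgame's independent refinement program. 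This last point is exactly where Theorem~\ref{thm:independence_subgames} and the observation that play never exits a subgame once entered are needed. A minor final remark is that because $\breve{v}(I^{\text{head}})$ and $\invbreve{\nu}(I^{\text{head}})$ are the pointwise-tightest of the bounds contributed by all $\sigma^!\in\hat{\game}$, a single refinement meeting $\breve{\mathcal{B}}_{i,j}\cup\invbreve{\mathcal{B}}_{i,j}$ discharges the safety requirement for every pre-subgame trigger at once.
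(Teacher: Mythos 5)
Your proposal is correct and follows essentially the same route as the paper's proof: reduce safety for each $\sigma^! \in \hat{\game}$ to the two-case sufficient bounds $\breve{\mu}(\sigma^!;\sigma^!)$ and $\invbreve{\beta}(\sigma^!;\sigma^!)$, then propagate the lower bounds on $\mu,v$ and upper bounds on $\beta,\nu$ by alternating sequence/infoset steps with the slack allocations $d,f,s$, starting from the head-infoset constraints in $\breve{\mathcal{B}}_{i,j}$ and $\invbreve{\mathcal{B}}_{i,j}$. If anything, you are more explicit than the paper about the boundary bookkeeping (leaf pairs $(\sigma_1,\sigma_2)$ with $\sigma_1\in\hat{\game}$ but $\sigma_2$ inside a subgame), a point the paper's appendix argument passes over silently.
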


\paragraph{Piecing the LP together}
Once these bounds are computed, enforcing them is simply a matter of placing them on top of the constraints for trigger sequences in $\check{\game}_j$ in \eqref{eq:incentive_inside_subgame}. For lower bounds of the form $(I, \breve(I)) \in \breve{\mathcal{B}}_{i, j}$, we introduce variables $v(I)$, where $v(I) = \sum_{\sigma:(I,a), a\in\mathcal{A}(I)}\mu(\sigma)$ and enforce $v(I) \geq \breve{v}(I)$. Note that the auxiliary variables $\mu(\sigma)$ has already been introduced as part of exploitability of $\check{\game}_j$ when enforcing \eqref{eq:incentive_inside_subgame}. For upper bounds $(I, \sigma^!, \invbreve{\nu}(I; \sigma^!)) \in \invbreve{\mathcal{B}}_{i, j}$, we introduce variables $\nu(I; \sigma^!)$ and enforce $\nu(I; \sigma^!) \leq \invbreve{\nu}(I; \sigma^!)$. To ensure that $\nu(I; \sigma^!)$ is indeed the value of the infoset given a trigger sequence $\sigma^!$, we will have to introduce auxiliary variables similar to \eqref{eq:lp_beta}, \eqref{eq:lp_nu} recursively. A summary and more precise explanation is included in the appendix. This LP is always feasible, since the blueprint would trivially satisfy all bounds constraints. To achieve full safety, we simply set the objective to be the component of social welfare culminating from subgame $j$. % The full LP is summarized in Figure~\ref{fig:lp_diagram} and described in detail in the appendix.
\section{Subgame Resolving with Regret Minimization}
%LP methods are effective, but tend to require much more memory compared to the regret minimization methods of \citet{farina2019efficient}, where the authors show that even for games of seemingly modest size, the solver \cite{gurobi} was unable to construct the problem due to a lack of memory. Hence, one would expect to face memory problems in resolving when subgames are large. Fortunately, it turns out that with some adjustments, regret minimization can also be applied to solve the resolving LP. The basic idea is remarkably simple, and shown in Algorithm~\ref{alg:regret_sr}. 
Our second algorithm is based on regret minimization. %Based on self-play, Algorithm~\ref{alg:regret_sr} leaverages on the unique structure of $\Xi$ to 
We solve a saddle-point problem using self-play, utilizing the scaled extension operator of \citet{farina2019efficient} to provide an efficient regret minimizer over $\polytopesub$. This leads to a significantly more efficient algorithm.

\paragraph{Refinements as a Bilinear Saddle-point Problem} 
First, we show that the refinement LP may be written as a bilinear saddle point problem, similar to %the bilinear saddle point formulation of computing EFCE in a full game as shown in 
what was done in \citet{farina2019correlation}.
%The LP \textit{without the objective} (and by extension, safe resolving) can be written as a bilinear saddle point problem. We do so using the `dualization' method similar to that used in \citet{farina2019correlation} for the full game EFCE. 
Observe that a refinement $\plansub$ is safe if and only if the greatest violation of the safety constraints to be equal to $0$. Building on this intuition, we introduce for each safety constraint, multipliers $\lambda^\delta_{i,\sigma^!}$, $\lambda^\nu_{i,I,\sigma^!}$ and $\lambda^v_{i,I}$ --- for exploitability (in $\check{\game}_j$), upper bounds, and lower bounds respectively. These multipliers are non-negative and sum to 1. Additionally, we introduce variables $\check{y}_{i,\sigma^!} \in \check{Y}_{i, \sigma^!}$ for $(I^!, a^!) = \sigma^! \in \check{\Sigma}_{i,j}$. Similarly, for trigger sequences $(I^!, a^!) = \sigma^! \in \sequencepre$, we introduce $\hat{y}_{i,\sigma^!} \in \hat{Y}_{i, \sigma^!}$. These $y$'s represent the components of the best-deviating responses to trigger sequences $\sigma^!$, and whose polytopes can be easily represented using the sequence-form representation of \citet{von1996efficient}. 
We explain in more detail in the Appendix. Resolving is equivalent to solving the following bilinear saddle point problem: 
{
\small 
\begin{align}
    \min_{\plansub}
    \max_{
    i, \lambda,y
    %\substack{
    %i, \lambda^{\delta}_{i, \sigma}, \lambda^{\nu}_{i, I, \sigma^!}, \lambda^v_{i, I}\\
    %\sum(\sum \lambda^{\delta} + 
    %     \sum \lambda^{\nu} +
    %     \lambda^v = 1
    %}
    }
    \left \{ 
    \begin{array}{c}
        \sum\limits_{i, \sigma^! \in \check{\Sigma}_{i, j}} 
        \left[ \plansub^T R^\delta z^\delta_{i, \sigma^!} + \plansub^T \left( \lambda^{\delta}_{i, \sigma^!} b^\delta_{i, \sigma^!} \right) \right] + \\
        \sum\limits_{\substack{i, (I, \sigma^!, \cdot) \\ \in \invbreve{\mathcal{B}}_{i, j}}} 
        \left[ \plansub^T R^\nu z^\nu_{i, \sigma^!} + \plansub^T \left( \lambda^{\nu}_{i, \sigma^!} b^{\nu}_{i, \sigma^!} \right) \right] + \\
        \sum\limits_{i, (I, \cdot) \in \breve{B}_j} \plansub^T \left( \lambda^v_{i, I} b^v_{i, I} \right)
    \end{array}
    \right \}, \label{eq:minmax_bilinear}
\end{align}
}
where $z^\delta_{i, \sigma^!} = \lambda^{\delta}_{i, \sigma^!} \check{y}_{i, \sigma^!}$ and $z^\delta_{i, \sigma^!} = \lambda^{\nu}_{i, \sigma^!} \hat{y}_{i, \sigma^!}$, for appropriately chosen constants $R, b$ (which may vary on $\blueprint$). Hence, we can treat the refinement problem as a \textit{zero-sum} game between a \textit{mediator}, who chooses a refinement $\plansub$ and \textit{deviator}, who chooses multipliers and best-deviating responses. This zero-sum game can be solved by running self-play between two Hannan-consistent regret minimizers and taking average strategies. A regret minimizer for the deviator can be constructed efficiently using counterfactual regret minimization \citep{zinkevich2007regret}. A regret minimizer over $\polytopesub$ is constructed using the decomposition technique used by \citet{farina2019efficient} with some additional tiebreaking rules to ensure we do not have to "fill-in" sequence pairs in $\hat{\game}$. 
The algorithm is outlined in Algorithm~\ref{alg:regret_sr}, with the full details of the modified decomposition algorithm and deviator polytope presented in the appendix.

%These correspond to sequence form strategies used in finding the best-deviating response to $\sigma^!$ being recommended, i.e., all sequences $\sigma \succeq \sigma(I), \sigma \not\succeq \sigma^!$. Similarly, for trigger sequences $(I^!, a^!) = \sigma^! \in \sequencepre$, we introduce $\hat{y}_{i,\sigma^!} \in \hat{Y}_{i, \sigma^!}$. Here, $\hat{Y}_{i, \sigma^!}$ is the best response for all sequences in $\check{\Sigma}_{i,j}$ given $\sigma^!$ was recommended, and $\hat{Y}_{i, \sigma^!}$ only contains sequences lying in $\check{\Sigma}_{i,j}$, i.e., for all deviating sequences $\sigma' = (I', a'), a' \neq a^!$ we look at all sequences $\sigma \succ \sigma'$ such that $\sigma \in \check{\Sigma}_{i,j}$, while obeying sequence form constraints

\begin{algorithm}[tb]
\caption{Refinement with Regret Minimization}
\label{alg:regret_sr}
\textbf{Input}: EFG, blueprint $\blueprint$
% \textbf{Parameter}: Optional list of parameters\\
% \textbf{Output}: Your algorithm's output
\begin{algorithmic}[1] %[1] enables line numbers
\STATE Decompose $\polytopesub$ into series of scaled extensions.
\STATE Construct regret RM'er $\mathcal{X}$ over $\polytopesub$.
\STATE Construct regret RM'er $\mathcal{Y}$ over deviators.
\WHILE{saddle point gap $\geq \epsilon$}
    \STATE $\plansub^{(t)} \leftarrow \mathcal{X}$.recommend; $y^{(t)} \leftarrow\mathcal{Y}$.recommend
    % \STATE $L \leftarrow $ ComputeLoss($\xi_t, \y_t$)
    \STATE $\mathcal{X}$.observeLoss($y_t$); $\mathcal{Y}$.observeLoss($\xi_t$)
\ENDWHILE
\end{algorithmic}
\end{algorithm}

%\begin{align*}
%    \xi[\sigma_1, \sigma_2], &\qquad \sigma_1, \sigma_2 \in \hat{\game} \\
%    \xi[\sigma_1, \sigma_2], &\qquad \sigma_1, \sigma_2 \in \check{\game} \\
%    \xi[\sigma_1, \sigma_2], &\qquad \sigma_1 \in \hat{\game}, \sigma_2 \in %\check{\game} \\
%    \xi[\sigma_1, \sigma_2], &\qquad \sigma_1 \in \check{\game}, \sigma_2 \in %\hat{\game}
%\end{align*}

%We will assume that states in $\game$ can be partitioned into $m$ subgames $\{ \check{\game}_j \}$ as well as a `pre-subgame' portion $\hat{\game}$. Note that $\hat{\game}$ obeys property (ii) of a subgame as well, since if some infoset is only partially contained in $\hat{\game}$, then it must be partially contained in some subgame as well, which is disallowed. Consequently, infosets may be likewise partitioned, and sequences $\sigma=(I,a)$ may be partitioned based on $I$, with $\varnothing$ belonging to $\hat{\game}$.

\section{Experiments}
We evaluate our algorithms using the LP-based and regret minimization-based refining. We use the benchmark game of EFCE called \textit{Battleship}, introduced by \citet{farina2019correlation}. This game is played in 2 stages. In the \textit{placement} stage, players privately place their ship(s) of size 1 by $m$ on $W \times H$ grid. In the \textit{firing} stage, players take turns firing at each other over $T$ timesteps, or until a player's ship is destroyed. Each shot is at a single tile, and a ship is considered destroyed when all tiles in the ship are shot at least once. 
%Locations of shots are known to both players, and cannot be made at the same location more than once. 
A player gets $1$ point for destroying the opponent's ship, but loses $\gamma$ points if his ship is destroyed. If no ship is destroyed by the end of the game, the game ends in a tie and both players get $0$. 
%Naturally, the NE is for players to place and fire their ships uniformly at random\footnote{at least when ships are of size 1.}. It was found by \citet{farina2019correlation} that in a SW-maximizing EFCE, mediators can recommend players to miss their shots, leading to more frequent peaceful resolutions while remaining incentive compatible (unexploitable).

We use 2 different correlation blueprints for our experiments, \textit{Uniform} and \textit{Jittered}. Both correlation plans are based on independent player strategies stored using the sequence form. That is, $\blueprintstrat[\sigma_1, \sigma_2] = \blueprintstrat^{(1)}(\sigma_1) \cdot \blueprintstrat^{(2)}(\sigma_2)$, where $\blueprintstrat^{(1)}, \blueprintstrat^{(2)}$ are sequence form strategies for each player. In \textit{Uniform}, $\blueprintstrat^{(i)}$ have actions uniformly at random at each infoset. In \textit{Jittered}, each player has randomly generated behavioral strategies. Here, for infoset $I \in \mathcal{I}_i$, action $a_j \in \mathcal{A}(I)$ is played with probability $p(a_j; I)= \kappa_{I,j} / \sum_k \kappa_{I, k}$, with $\kappa_{I,k} = 1 + w \cdot \varepsilon_{I,k}$, where each $\varepsilon_{I,k}$ is drawn independently and uniformly from $[-1, 1]$ and $w \in [0, 1]$ is a width parameter governing the level of deviation from uniform strategies. 

Subgames are defined based on public information, which at the $k$-th step of firing are precisely the locations fired by each player. We base subgames on the shot history up till timestep $T'<T$. $T'$ balances the trade-off between accuracy versus computational costs. For a grid of size $n$, we have  
$J=\prod_{k = T-T'+1}^{T} k^2$ subgames. When $T'$ is small, we have fewer subgames, but can achieve better social welfare. All experiments are run on an Apple M1 Chip with 16GB of RAM with 8 cores. LPs are solved using Gurobi \cite{gurobi}.

\paragraph{Safe resolving with SW maximization}
We first show using our LP-based method that ensures fully safe resolving can lead to significantly higher social welfare as compared to the blueprint. We set $T'=1$ and we use ships with $m=1$, i.e., the game is over once any ship is hit. Consequently, the game is entirely symmetric in terms of location. The NE here is to play and shoot uniformly at random. Hence, \textit{Uniform} is a valid, though not SW-optimal EFCE. Under \textit{Uniform}, the exploitability $\delta^*$ under the blueprint is $0$, implying that the complete refinement $\tilde{\bm{\xi}}$ is also an EFCE. We perform refinement on the first subgame (this without loss of generality due to symmetry) and compare the SW accumulated from the subgame under the blueprint and refinement. For \textit{Jittered}, we repeated the experiment $10$ times with different seeds and report the mean. The results are reported in Table~\ref{tbl:bs_lp}. %Observe as a sanity check that the refined strategy always gives a SW no smaller than the blueprint. This is expected from a fully safe refinement. 
%We also report the SW from the subgame under the optimal solution obtained using the full LP of \citet{von2008extensive}.
%\footnote{Note however that in general the result from the full-solver \textit{for a particular subgame} may not necessarily give a better result than the the refined strategy or even the blueprint. This is because (i) the full solver has tighter incentive constraints, and (ii) the full-solver may get higher SW from other parts of the game, at the expense of lower SW for this particular subgame.} 
In all our experiments, our refined strategy $\plansub$ gives a much higher SW. For example, in the largest example with $\gamma=2$, SW increases by 4.6e-3. \textit{This is not a negligible improvement}; since this is applied to all $36$ subgames, the expected improvement in SW of the complete refinement $\tilde{\bm{\xi}}$ is actually 0.167. $|\polytopesub|$ is significantly smaller than $|\Xi|$, such that each refinement is computed in no more than 10 seconds.

\begingroup
\setlength{\tabcolsep}{5pt}
\begin{table}[t]
\centering
\begin{tabular}{ccccccc}
 $n$, $T$  & \multirow{2}{*}{$|\polytopesub|$} & \multirow{2}{*}{$\gamma$} &   \multicolumn{2}{c}{\textit{Uniform}} & \multicolumn{2}{c}{\textit{Jittered}} \\
  $J$ & & & BP & Refined & BP & Refined \\
 \hline \hline 
 $3, 2$, & \multirow{2}{*}{382} & 2&-3.70 & -3.70 &-3.55  & -3.55 \\
  9&   & 5& -14.8 & -14.8 & -14.2 & -14.2 \\
  \hline
  $4, 3$, &\multirow{2}{*}{3.2e3} & 2& -3.13 & -2.95 & -3.24 & -3.10 \\
  16 &   & 5 & -12.5 & -11.4 & -13.0 & -11.8 \\
  \hline
  $5, 3$, &\multirow{2}{*}{2.3e4} &2  & -1.92 & -1.34 & -1.95 & -1.25 \\
   25 &   & 5 & -7.68 & -4.80 & -7.82 & -4.32 \\
  \hline
  $6, 3$, & \multirow{2}{*}{1.2e5}& 2& -1.23 & -.772 & -1.25 & -.627 \\
  36&   & 5& -4.94 & -2.47 & -4.99 & -1.95   \\
  \hline
\end{tabular}
\caption{Comparison of social welfare between blueprint (BP) and SW-maximizing safe refinement with ships of size $1$. Social welfare is reported at a scale of 1e-2.}
\label{tbl:bs_lp}
\end{table}
\endgroup

\paragraph{Safe resolving using regret minimization}
We now demonstrate the scalability of refinement based on regret minimization. Our goal here is to demonstrate that subgame resolving can be performed efficiently for games that are too large for $\bm{\xi}$ to even be stored in memory. We run refinement using our regret minimization algorithm and report the "pseudo"-exploitability of $\plansub$ (i.e., the value of the inner maximization over $(i, \lambda, y)$,  \eqref{eq:minmax_bilinear}, or the most violated incentive constraint of the LP). We use $T'=1, \gamma=2$ and the \textit{Uniform} blueprint. The results are reported in Figure~\ref{fig:regret}. Our huge instance is several times larger than the largest instance in \citet{farina2019efficient}, and it would require a significant amount of memory to store a full correlation plan $\tilde{\bm{\xi}}$ . We find that in practice, resolving requires less than $0.5$ seconds per iteration, while using no more than 2GB of memory.

\begin{figure}
    %\centering
    \begin{minipage}{0.45\linewidth}
    \includegraphics[scale=0.27]{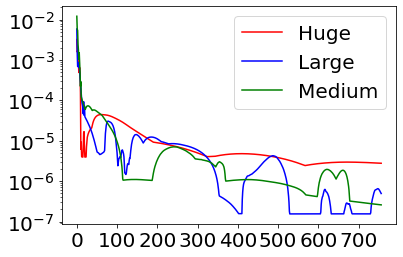}
    \end{minipage}
    \begin{minipage}{0.4\linewidth}
    \begingroup
    \setlength{\tabcolsep}{3pt}
    \begin{tabular}{cccc}
        & Med. & Large & Huge \\
        \hline \hline
        $W,H$  & 3,2 &3,2 &3,2 \\
        \hline
        $T$ &4 &4 &5\\
        \hline
        $m$ & 1&2 &2 \\
        \hline
        $|\Xi|$ & 3.89M &111M & 360M \\
        \hline
        % $|\polytopesub|$ & 1.17M & 6.21M & 20.9M\\
        % \hline
    \end{tabular}
    \endgroup
    \end{minipage}
    \caption{Left: Most violated incentive constraint of $\tilde{\bm{\xi}}$ plot against iteration number. Right: Parameters of game.}
    \label{fig:regret}
\end{figure}
\section{Conclusion}
In this paper, we propose a novel subgame resolving technique for EFCE. We offer two algorithms, the first based on LPs and the second uses regret minimization, both of which consume significantly less compute than full-game solvers. Our technique is, to the best of our knowledge, the first \textit{online} algorithm towards solving EFCE. In future, we hope to expand our work to other equilibria, such as those involving hindsight rationality \cite{morrill2021efficient}.

\section{Acknowledgements}
This work was supported in part by NSF grant IIS-2046640 (CAREER) and a research grant from Lockheed Martin.

% Use \bibliography{yourbibfile} instead or the References section will not appear in your paper
% \nobibliography{aaai22}

\bibliography{aaai22.bib}

\newpage

\appendix
\section{Definition of Perfect Recall}
Formally, we require that (i) if $h$ and $h'$ belonging to $P_i$ are in distinct information sets $I_i$ and $I_i'$, then their descendants $s \sqsupset h$ and $s' \sqsupset h'$ never belong to the same information set, (ii) if states $h, h'$, possibly equal, belong to the same information set $I_i$, which contains distinct actions $a, a'$, then the states $s \sqsupseteq ha$ and $s' \sqsupseteq h'a'$ belong to different information sets.
\section{Proof of Theorem~\ref{thm:independence_subgames}}
\begin{proof}
The cases are disjoint. Hence, it suffices to show that there does not exist $\sigma_1 \relevant \sigma_2$, where $\sigma_1 = (I_1, a_1) \in \check{\game}_j, \sigma_2 =(I_2, a_2) \in \check{\game}_k$ for $j\neq k$. By definition $\sigma_1 \connected \sigma_2$ implies that there exists a path from the root passing through vertices $v_1 \in I_1$ and $v_2 \in I_2$. WLOG suppose that $v_1$ precedes $v_2$ in this path. Then $v_2$ must lie in $\check{\game}_j$ by property (i) of the subgame. This is a contradiction.
\end{proof}
\section{Full Description of LP}
In this section, we describe the LP used to compute a refinement $\plansub$ in subgame $j$ in detail. There are 2 classes of constraints in the LP -- the structural constraints an the incentive constraints. The \textit{structural constraints} enforce that $\plansub$ lies in $\polytopesub$. The \textit{incentive constraints} consist of three sets of constraints (A-C) which ensure safety. (A) ensures that safety is achieved for in-subgame triggers, while (B) and (C) ensures safety for triggers that lie in $\hat{\game}$.
\subsection{Structural Constraints}
Let us first describe the indices of $\polytopesub$. These are sequence pairs $(\sigma_1, \sigma_2)$, where $\sigma_i$ is either $\varnothing$ or $(I_i, a_i)$. As with $\Xi$, we require $\sigma_1 \relevant \sigma_2$. But, we also require that none of $\sigma_i$ lies in another $\game_k$, $k\neq j$, i.e., $S_k$ in Theorem~\ref{thm:independence_subgames}. That is, if $\sigma_i$ is non-empty and $\sigma_i = (I_i,a_i)$, then $I_i$ lies in subgame $j$. Naturally, we require that $\plansub \geq 0$, and that $\plansubstrat[\varnothing, \varnothing] = 1$.

\paragraph{Sequence-form constraints on rows and columns of $\plansub$.}
Next, we will enforce the sequence-form constraints on the rows and columns of $\polytopesub$. These are similar to the flow conservation constraints, but are for the probability of sequence pairs. For example, in the left subgame game in Figure~\ref{fig:signalling}, $\polytopesub$ is determined by the first 3 columns, and we have row constraint $\xi[\sigma_1, \ell_x] + \xi[\sigma_1, r_x] = \xi[\sigma_1, \varnothing]$, and column constraints $\xi[G, \sigma_2] + \xi[B, \sigma_2] = \xi[\varnothing, \sigma_2]$,
$\xi[X_G, \sigma_2] + \xi[Y_G, \sigma_2] = \xi[G, \sigma_2]$, and 
$\xi[X_B, \sigma_2] + \xi[Y_B, \sigma_2] = \xi[B, \sigma_2]$ for $\sigma_2 \in \{ \varnothing, \ell_x, r_x\}$. Generally, we have
\begin{align*}
    \polytopesub := 
    \left\{
    \bm{\xi} \geq 0: 
    \begin{array}{ccc}
    \bm{\xi} [\varnothing, \varnothing] = 1, \\ 
    \sum_{a \in \mathcal{A}(I)} \xi[(I_1,a), \sigma_2] = \xi[\sigma(I_1), \sigma_2], \\
    \sum_{a \in \mathcal{A}(I)} \xi[\sigma_1, (I_2, a)] = \xi[\sigma_1, \sigma(I_2)]
    \end{array}
    \right\},
\end{align*}
The constraints defining $\Xi_j$ are essentially identical to that of $\Xi$ -- the convex polytope of correlation plans in the original game, except we now work with a restricted index set.

\paragraph{Equality-to-blueprint constraints.}
We now examine the equality-to-blueprint constraints. This ensures that $\plansub$ is indeed a refinement of $\blueprintstrat$. In Figure~\ref{fig:signalling}, this would mean that the entries in $\hat{\game}$ (entries in the first column) is equal to the blueprint, i.e., $\plansubstrat[\sigma_1, \varnothing] = \blueprintstrat[\sigma_1, \varnothing]$ for all $\sigma_1 \in \Sigma_1$. More generally, for all $(\sigma_1, \sigma_2) \in S_0$ (i.e., $\sigma_1, \sigma_2$ are both either equal to $\varnothing$ \textit{both} or do not belong to a subgame), we require that $\plansubstrat[\sigma_1, \sigma_2] = \blueprintstrat[\sigma_1, \sigma_2]$.

\subsection{Auxiliary variables}
Enforcing constraint set (A-C) requires the introduction of numerous variables, which can be a little daunting at first glance. We first lay them out here for clarity. For this section, unless otherwise stated, these variables are with respect to $\plansub$. We will make it explicit if we need to reference the blueprint $\blueprint$.

\paragraph{Values of sequences assuming no deviation.} The first are the variables $\mu(\sigma)$, which exist for each player for all $\sigma = (I, a) \in \Sigma_i$, where $I \in \check{I}_j$. These capture the value of contribution of payoffs for $P_i$ \textit{assuming both players abide to all recommendations} under $\plansubstrat$ for all leaves involving sequences $\sigma' \succeq \sigma$. These can be computed recursively the same way as the original LP of \citet{von2008extensive} in \eqref{eq:lp_mu}. 
\begin{align*}
    \mu(\sigma) &= \sum_{\sigma_2; (\sigma, \sigma_2) \in \mathcal{L}} u_1(\sigma, \sigma_2) \plansubstrat[\sigma, \sigma_2] + \sum_{\sigma' \succ_1 \sigma} \mu(\sigma') ,
\end{align*}
that is, value of each sequence $\sigma_i$ is given by the rewards for $P_i$ from leaves containing $\sigma_i$, plus the rewards from future sequences (computed recursively). Note that by the definition of a subgame (Definition~\ref{def:subgame}), during the recursive process, we will never have to `address' a $\mu(\sigma')$ where $\sigma'$ lies outside of $\check{\Sigma}_{j}$. The number of variables and constraints is approximately $|\check{\Sigma}_{j}|$. We will eventually use these in constraints for in-subgame triggers (A), as well as the lower bounds in (B).

\paragraph{Value of infosets assuming no deviations.} For convenience, we will write the value of infosets of $I$ as $v(I) = \sum_{\sigma: (I, a), a \in \mathcal{A}(I)} \mu(\sigma)$. These are used for the lower bounds in (A). In our implementation, they are also used as auxiliary variables while recursively computing $\mu(\sigma)$. This is equivalent to the definition of $\mu(\sigma)$ provided above. The number of variables here is $|\check{\mathcal{I}}_j|$ (which is in turn upper bounded by $|\check{\Sigma}_{j}|$.

\paragraph{Values of infosets under deviations.} Next, we have the variables $\nu(I; \sigma^!)$. These represent the values of infoset $I$ given that the player was triggered by $\sigma^!$, deviated and plays the best response after his deviation. Let $\sigma^! = (I^!, a^!)$ be a trigger sequence, and $\sigma' = (I^!, a'), a' \neq a^!$ is the sequence which was deivated to. $v(I; \sigma^!)$ exists for each $I \in \mathcal{I}_j$ where $\sigma(I) \succeq \sigma'$, i.e., $I$ (which belongs in the subgame $j$) could be encountered after deviating to $\sigma'$, which lies in the same infoset as $\sigma^!$. The variables in $\nu$ can be further split into 2 groups. If $\sigma^!$ lies in $\check{\Sigma}_{j}$, then this is similar to the variable in \eqref{eq:lp_nu}. If not, then note that these are only created for infosets within subgame $j$. The former is used in enforcing safety for in-subgame triggers (A) and the latter for constraint set (C). The total number of variables here for each $P_i$ is no greater (and in practice, usually much smaller) than $(|\hat{\Sigma}_i| + |\check{\Sigma}_{i,j}| ) \cdot |\check{\mathcal{I}}_{i,j}|$.

\paragraph{Value of sequences under deviations.} $\beta(\sigma; \sigma^!)$ is very similar to $\nu$, in that it is the value of a sequence assuming the last recommendation received and deviated from was $\sigma^!$. Again, we are only concerned with sequences $\sigma \in \check{\Sigma}_{j}$, and those which could be reached after deviation from $\sigma^!$. 
For a fixed trigger sequence $\sigma^!$, $\beta(\sigma; \sigma^!)$ and $\nu(I; \sigma^!)$ can be computed together recursively using Equations~\eqref{eq:lp_nu} and \eqref{eq:lp_beta}. The inequalities ensure that the values in $\nu$ and $\beta$ are such that these are no less than best-responses towards $\plansub$. The number of variables here is no greater than $(|\hat{\Sigma}_i| + |\check{\Sigma}_{i,j}| ) \cdot |\check{\Sigma}_{i,j}|$.

\subsection{Incentive Constraint Set (A)}
Recall that these ensure that safety is achieved for $\sigma^! \in \check{\Sigma}_j$. For each such $\sigma^!$, we have constraints of the form (as in the main paper)
\begin{align}
    \mu(\tilde{\bm{\xi}}; \sigma^!) \geq \beta^*(\tilde{\bm{\xi}}; \sigma^!) - \delta^*(\blueprint; \sigma^!).
\end{align}
\subsection{Incentive Constraint Set (B)}
We showed in the main paper that these are intended to guarantee lower bounds on $\mu(\sigma^!)$, where $\sigma \in \hat{\Sigma}_i$. The sufficient conditions for doing so are in the set $\breve{\mathcal{B}}_{i,j}$, each of the form $(I, \breve{v}(I))$. As mentioned in the main paper, we will need 
\begin{align*}
    v(I^{\text{head}}) \geq \breve{v}(I^{\text{head}})
\end{align*}
for some $I^{\text{head}} \in \mathcal{I}^{\text{head}}_{i,j}$.
\subsection{Incentive Constraint Set (C)}
Similarly, (C) is intended to upper bound $\beta^*(\sigma^!)$, i.e., ensure that deviating would not be too beneficial. This is achieved by making sure $\beta^*(\sigma; \sigma^!)$ for all $\sigma \neq \sigma^!$ and $\sigma=(I^!,a)$. As we showed in the main paper, this can be achieved when $\nu(I^\text{head}; \sigma^!) \leq \invbreve{\nu}(I, \sigma^!)$ for each tuple $(I, \sigma^!, \invbreve{v}(I; \sigma^!)) \in \invbreve{\mathcal{B}}_{i,j}$. 

\begin{figure}
\begin{tikzpicture}[x=0.75pt,y=0.75pt,yscale=-1,xscale=1]
%Shape: Rectangle [id:dp7643010679316238] 
\draw   (0,-5) -- (315,-5) -- (315,240) -- (0,240) -- cycle ;
%Shape: Rectangle [id:dp16107732240966532] 
\draw   (5,157) -- (310,157) -- (310,235) -- (5,235) -- cycle ;
%Shape: Rectangle [id:dp8948602620776764] 
\draw   (5,20) -- (310,20) -- (310,103) -- (5,103) -- cycle ;
%Straight Lines [id:da8083993734282646] 
\draw   [-{Stealth[scale=2]}] (168,67) -- (190,67) ;
%Straight Lines [id:da6371952677216786] 
\draw   [-{Stealth[scale=2]}] (168,130) -- (190,130) ;
%Straight Lines [id:da20826188427390835] 
\draw   [-{Stealth[scale=2]}] (168,200) -- (190, 200) ;
% diagonal line
%Straight Lines [id:da20826188427390835] 
\draw   [-{Stealth[scale=2]}] (168,130) -- (190, 67) ;
%Shape: Rectangle [id:dp11565365531950755] 
\draw  [dash pattern={on 4.5pt off 4.5pt}] (10,40) -- (170,40) -- (170,100) -- (10,100) -- cycle ;
%Shape: Rectangle [id:dp0585144365623127] 
\draw  [dash pattern={on 4.5pt off 4.5pt}] (10,180) -- (170,180) -- (170,220) -- (10,220) -- cycle ;
%Shape: Rectangle [id:dp8438546547288444] 
\draw  [dash pattern={on 4.5pt off 4.5pt}] (62,110) -- (154,110) -- (154,150) -- (62,150) -- cycle ;

% Text Node
\draw (195,40) node [anchor=north west][inner sep=0.75pt]  [font=\small]  {$
\begin{array}{c}\mu \left(\sigma ^{!}\right) \geq \beta ^{*}\left(\sigma ^{!}\right)\\ -\delta ^{*}\left(\blueprint ;\sigma ^{!}\right) \ 
\end{array}
$};
% Text Node
\draw (260,0) node [anchor=north west][inner sep=0.75pt]    {$i \in \{1,2\} \ $};
% Text Node
\draw (62,115) node [anchor=north west][inner sep=0.75pt]  [font=\small]  {$ \begin{array}{l}
\mu ( \sigma ) \ \forall \sigma \in \check{\Sigma }_{i,j}\\
v( I) \ \forall I\ \in \check{\Sigma}_{i,j}
\end{array}$};
% Text Node
\draw (5,65) node [anchor=north west][inner sep=0.75pt]  [font=\small]  {$ \begin{array}{l}
\beta \left( \sigma ;\sigma ^{!}\right) \ \forall \sigma \succeq \sigma '\ \\
\nu \left( I ;\sigma ^{!}\right) \ \forall \sigma ( I) \succeq \sigma '
\end{array}$};
% Text Node
\draw (260,160) node [anchor=north west][inner sep=0.75pt]  [font=\small]  {$\forall \sigma ^{!} \in \hat{\Sigma }_{i}$};
% Text Node [baseline]
\draw (200,22) node [anchor=north west][inner sep=0.75pt]  [font=\small]  {$\forall \left( I^{!} ,a^{!}\right) =\sigma ^{!} \in \check{\Sigma }_{i,j}$};
% Text Node
\draw (5,185) node [anchor=north west][inner sep=0.75pt]  [font=\small]  {$ \begin{array}{l}
\beta \left( \sigma ;\sigma ^{!}\right) \ \forall \sigma \succeq \sigma ^{!} ,\ \sigma \in \check{\Sigma}_{i,j}\\
\nu \left( I;\sigma ^{!}\right) \ \forall \sigma ( I) \succeq \sigma ^{!} ,\ I\in \check{\mathcal{I}}_{i,j}
\end{array}$};
% Text Node
\draw (190,105.4) node [anchor=north west][inner sep=0.75pt]  [font=\footnotesize]  {$ \begin{array}{l}
\forall ( I,\ \breve{v}( I)) \in \breve{\mathcal{B}}_{i,j}\\
v( I) \geq \breve{v}( I)
\end{array}$};
% Text Node
\draw (185,180.4) node [anchor=north west][inner sep=0.75pt]  [font=\footnotesize]  {$ \begin{array}{l}
\forall \left( I,\ \sigma ^{!} ,\invbreve{v}( I)\right) \in \invbreve{\mathcal{B}}_{i,j}\\
\nu \left( I ;\sigma ^{!}\right) \leq \invbreve{v}\left( I;\sigma ^{!}\right)
\end{array}$};
% Text Node
\draw (63.11,45) node [anchor=north west][inner sep=0.75pt]  [font=\small]  {$\forall \sigma '\in \mathcal{A}\left( I^{!}\right) \backslash \{a^!\} \ $};
% Text Node
\draw (200,75) node [anchor=north west][inner sep=0.75pt]   [align=left] {(A) Exploit in $\check{\game_{j}}$};
% Text Node
\draw (200,140) node [anchor=north west][inner sep=0.75pt]   [align=left] {(B) Lower bounds};
% Text Node
\draw (200,220) node [anchor=north west][inner sep=0.75pt]   [align=left] {(C) Upper bounds};
\end{tikzpicture}
\caption{Summary of the incentive constraints required by a refinement $\plansub \in \polytopesub$, where the term $\plansub$ is omitted for brevity. Auxilary variables are on the left, while constraints are on the right. Dashed boxes are either computed recursively, either as a best-deviating responses to trigger sequences $\sigma^!$ or as the value of a sequence/infoset. Quantifiers in solid boxes are applied to all variables and constraints in the box. Arrows signify dependencies between variables and constraints.}
\label{fig:lp_diagram}
\end{figure}
Figure~\ref{fig:lp_diagram} summarizes the variables and constraints (excluding structural constraints).
\section{Proof of Theorem~\ref{thm:satisfies_bounds}.}
In the main paper, we showed that safety will be achieved for trigger sequence $\sigma^!$ with an appropriate choice of $\invbreve{\beta}(\sigma^!; \sigma^!)$ and $\breve{\mu}(\sigma^!; \sigma^!)$.

\subsection{Showing lower bounds $\mu(\blueprint, \sigma^!) \geq \breve\mu(\sigma^!; \sigma^!)$ hold.}
 Suppose for all $\sigma' \succ_1 \sigma$, where $\sigma \succeq \sigma^!$ we have $\mu(\planref, \sigma') \geq \breve{\mu}(\sigma'; \sigma^!)$. We show this implies that $\mu(\planref, \sigma) \geq  \breve{\mu}(\sigma; \sigma^!)$. 

\begin{align*}
    \mu(\planref, \sigma') - \breve{\mu} (\sigma; \sigma^!) \geq 0 \\
    \mu(\planref, \sigma') - \mu(\blueprint, \sigma) + f(I;\sigma^!) \geq 0 \\
    \mu(\planref, \sigma') - \mu(\blueprint, \sigma)+ 
    (v(\blueprint,I) - \breve{v}(I; \sigma^!))/\mathcal{A}(I) \geq 0
\end{align*}
Now we take summations over all $\sigma^!$ belonging to infoset $I$, where $\sigma(I) = \sigma$. This gives
\begin{align*}
    \sum_{\sigma'} (\mu(\planref, \sigma') - \mu(\blueprint, \sigma)+ 
    (v(\blueprint,I) - \breve{v}(I; \sigma^!))/(\mathcal{A}(I))) \geq 0 \\
    \sum_{\sigma'} (\mu(\planref, \sigma') - \mu(\blueprint, \sigma)+ 
    (v(\blueprint,I) - \breve{v}(I; \sigma^!))) \geq 0 \\
    \sum_{\sigma'} (\mu(\planref, \sigma') - \mu(\blueprint, \sigma))+ 
    v(\blueprint,I) - \breve{v}(I; \sigma^!) \geq 0 \\
    v(\planref, I)  - \breve{v}(I; \sigma^!) \geq 0,
\end{align*}
that is, the `subbound' of $\breve{v}$ is satisfied. We repeat a similar process for all $I$ such that $\sigma(I) = \sigma$.
\begin{align*}
    v(\planref, I)  - v(\blueprint, I) - d(\sigma;\sigma^!) \geq 0 \\
    v(\planref, I)  - v(\blueprint, I) - 
    \frac{
    \mu(\blueprint, \sigma) - \breve{\mu}(\sigma; \sigma^!)
    }{| \{ I |\sigma(I)=\sigma \} |} \geq 0.
\end{align*}
Taking summations over all such $I$,
\begin{align*}
    \sum_{I:\sigma(I) = \sigma} \left( v(\planref, I)  - v(\blueprint, I) \right) - 
    \mu(\blueprint, \sigma) - \breve{\mu}(\sigma; \sigma^!)
     \geq 0\\ 
     \mu(\planref, \sigma) - \breve{\mu}(\sigma; \sigma^!) \geq 0
\end{align*}
Applying this repeatedly starting from the bounds in $\breve{\mathcal{B}}$ gives us the required result. 

\subsection{Showing upper bounds $\beta^*(\tilde{\bm{\xi}}; \sigma^!) \leq \invbreve{\beta}(\sigma^!; \sigma^!)$ hold.}

As before, suppose for all $\sigma' \succ_1 \sigma$, where $\sigma \succeq \sigma^!$ we have $\beta(\sigma', \planref; \sigma^!) \leq \invbreve{\beta}(\sigma', \sigma^!)$. 
\begin{align*}
    \beta(\sigma', \planref; \sigma^!) \leq \invbreve{\beta}(\sigma', \sigma^!) \\
    \beta(\sigma', \planref; \sigma^!) \leq \invbreve{v}(I; \sigma^!) \\
    \nu(I, \planref; \sigma^!) \leq \invbreve{v}(I; \sigma^!) \\
    \nu(I, \planref; \sigma^!) \leq \nu(I, \blueprint; \sigma^!) + s \\
    \nu(I, \planref; \sigma^!) \leq \nu(I, \blueprint; \sigma^!) + 
    \frac{ \invbreve{\beta} (\sigma; \sigma^!) - \beta(\sigma, \blueprint; \sigma^!)}{| \{ I|\sigma(I) = \sigma \} |}
\end{align*}
Summing over all $I$ where $\sigma(I)=\sigma$ and the definiion of $\beta$ in Definition~\ref{def:exploitability} gives
\begin{align*}
    \beta(\sigma, \planref; \sigma^!) \leq  
    \invbreve{\beta} (\sigma; \sigma^!).
\end{align*}
As before, we begin with $\nu(I, \planref; \sigma^!)$ meeting the bounds in $\invbreve{\mathcal{B}}$. We then repeatedly apply these derivations repeatedly until we have $\beta(\sigma, \planref; \sigma^!) \leq \invbreve{\beta} (\sigma; \sigma^!)$ for all $\sigma = (I^!, a), a \neq a^!$. This implies $\beta^*(\planref; \sigma^!) \leq \invbreve{\beta} (\sigma; \sigma^!)$ as required.

Combining these results together with the discussion in the main paper completes the result.
\section{Refinements as a Bilinear Saddle-point Problem}
The LP \textit{without the objective} (and by extension, safe resolving) can be written as a bilinear saddle point problem. Consider a refinement $\plansub$. The largest violation of a constraint is:
\begin{align*}
    \max_{i \in \{ 1, 2 \}}
    %\begin{array}{l}
    %C, d \\
    %e, F, g
    %  \end{array}
    \left\{
    \begin{array}{c}
    \max\limits_{\sigma^! \in \check{\Sigma}_{i, j}}
    % \beta^* (\plansub; \sigma^!) \\
    \delta^*(\tilde{\bm{\xi}}; \sigma^!) - \delta^*(\blueprint; \sigma^!) \\
    \max\limits_{(I, \sigma^!, \invbreve{\nu}(I; \sigma^!) )\in \invbreve{\mathcal{B}}_{i,j}} \nu(I, \plansub; \sigma^!) - \invbreve{\nu}(I; \sigma^!) \\
    \max\limits_{(I, \breve{v}(I)) \in \breve{\mathcal{B}}_{i,j} } \breve{v}(I) - v(\plansub; I) \\
    \end{array}
    \right\}.
\end{align*}
The inner maximizations are for (A) safety for trigger sequences $\sigma^! \in \check{\game}_j$ (C) upper bounds on values head infoset $I$ under the best deviating response to a the trigger sequences $\sigma^!$, and (B) lower bounds on values of head infosets assuming no deviation occurs. If the $\plansub$ satisfies the LP, then the above expression is non-positive. These nested maximizations can be rewritten as the maximization of a linear function over a polytope with a polynomial number of constraints. For each $i \in \{ 1,2 \}$ we introduce multipliers for each of the maximizations: $\lambda^\delta_{i, \sigma^!} \forall \sigma^! \in \check{\Sigma}_{i,j}, \lambda^{\nu}_{i, I, \sigma^!}, \lambda^v_{i, I}$. 
{
\small 
\begin{align*}
    \max_{
    \substack{
    i, \lambda^{\delta}_{i, \sigma^!}, \lambda^{\nu}_{i, I, \sigma^!}, \lambda^v_{i, I}\\
    \sum(\sum \lambda^{\delta} + 
         \sum \lambda^{\nu} +
         \lambda^v = 1
    }
    }
    \left \{
    \begin{array}{c}
        \lambda^{\delta}_{i, \sigma^!} (\delta^*(\plansub; \sigma^!) - \delta^*(\blueprint; \sigma^!)) + \\
        \lambda^{\nu}_{i, I, \sigma^!} (\nu(I, \plansub; \sigma^!) - \invbreve{\nu}(I; \sigma^!)) + \\
        \lambda^v_{i, I} (\breve{v}(I) - v(\plansub; I))
    \end{array}
    \right \}
\end{align*}
}
Optimizing over $\nu$ is simply finding the best-deviating response to a trigger sequence $\sigma^!$ over a polytope $Y_{i, \sigma^!}$ using the sequence form representation \citep{von1996efficient}. 
Given $\mu$ as well as the bounds are constants, the expression can be written as a single \textit{linear} maximization over the multipliers and $y \in Y_{i, \sigma^!}$. Thus, we can rewrite the entire expression into a single bilinear maximization problem over $\tilde{\bm{\xi}}$ and the multipliers: 
{
\small 
\begin{align*}
    \min_{\plansub}
    \max_{
    i, \lambda,y
    %\substack{
    %i, \lambda^{\delta}_{i, \sigma}, \lambda^{\nu}_{i, I, \sigma^!}, \lambda^v_{i, I}\\
    %\sum(\sum \lambda^{\delta} + 
    %     \sum \lambda^{\nu} +
    %     \lambda^v = 1
    %}
    }
    \left \{ 
    \begin{array}{c}
        \sum\limits_{i, \sigma^! \in \check{\Sigma}_{i, j}} 
        \left[ \plansub^T R_{\sigma^!}^\delta z^\delta_{i, \sigma^!} + \plansub^T \left( \lambda^{\delta}_{i, \sigma^!} b^\delta_{i, \sigma^!} \right) \right] + \\
        \sum\limits_{\substack{i, (I, \sigma^!, \cdot) \\ \in \invbreve{\mathcal{B}}_{i, j}}} 
        \left[ \plansub^T R_{I, \sigma^!}^\nu z^\nu_{i, \sigma^!} + \plansub^T \left( \lambda^{\nu}_{i, \sigma^!} b^{\nu}_{i, \sigma^!} \right) \right] + \\
        \sum\limits_{i, (I, \cdot) \in \breve{B}_j} \plansub^T \left( \lambda^v_{i, I} b^v_{i, I} \right)
    \end{array}
    \right \}, 
\end{align*}
}
where $z^\delta_{i, \sigma^!} = \lambda^{\delta}_{i, \sigma^!} \check{y}_{i, \sigma^!}$ and $z^\delta_{i, \sigma^!} = \lambda^{\nu}_{i, \sigma^!} \hat{y}_{i, \sigma^!}$, for appropriately chosen constants $R, b$ (which may vary on $\blueprint$). Hence, we can treat the refinement problem as a \textit{zero-sum} game between a \textit{mediator}, who chooses a refinement $\plansub$ and \textit{deviator}, who chooses multipliers and best-deviating responses. This zero-sum game can be solved by running self-play between two Hannan-consistent regret minimizers and taking average strategies. A regret minimizer for the deviator can be constructed efficiently using existing techniques \citep{farina2019regret} or simply CFR \cite{zinkevich2007regret}. 

We now briefly describe what $R$ and $b$ contain. $R^\delta$ and $R^{\nu}$ are constants for each trigger sequence $\sigma^!$, such that for a best response (weighted by $\lambda^\delta$ and $\lambda^\nu$), $\plansub^T R^\delta z^\delta_{i, \sigma^!}$ gives the largest possible reward for deviating, and in the case of $\nu$ the value of the head infoset. $b$ contains two components (i) the bound (either upper/lower or safety bounds for in-subgame deviations), and (ii) the value of sequences /infosets assuming best-responses.

\section{Decomposition of $\polytopesub$ using scaled extensions} 
We first briefly describe the decomposition algorithm of \citet{farina2019efficient}. The reader is directed there for more details.

\paragraph{Scaled Extensions} The scaled extension operator is a convexity preserving operation between sets. It was used by \citet{farina2019efficient} to incrementally extend the strategy space $\Xi$ in a top-down fashion. 

\begin{definition}{Scaled Extension (\cite{farina2019efficient})}
Let $\mathcal{X}$ and $\mathcal{Y}$ be non-empty, compact, and convex sets, and let $h: \mathcal{X} \rightarrow \mathbb{R}_+$ be a nonnegative affine real function. The scaled extension of $\mathcal{X}$ with $\mathcal{Y}$ via $h$ is defined as the set
\begin{align*}
\mathcal{X} \lhd_h \mathcal{Y} := \{ (\bm{x},\bm{y}): \bm{x} \in \mathcal{X}, y \in h(\bm{x})\mathcal{Y}\}
\end{align*}
\end{definition}
It was shown that scaled extensions preserve convexity, non-emptiness, and compactness of sets. 

\paragraph{Expression $\Xi$ using scaled extensions.} The idea is that some of the structural constraints of $\Xi$ were redundent, and can be expressed in terms of others. For example, supposed we were trying to express $\Xi$ in Figure~\ref{fig:signalling}. In top-down fashion, we begin with $\xi[\varnothing, \varnothing]$. Now, we look at the constraints that sum to $\xi[\varnothing, \varnothing]$. This gives 3 options: expand block 2, or blocks 11 or 12. Let us expand block 2. By `expand' what we do is to apply the scaled extension using a set $\mathcal{Y}$ which is a simplex of size 2. The function $h$ is chosen to be $0$ everywhere except for the index which we are expanding from (in this case $\xi[\varnothing, \varnothing]$. T increases the `size' of the set by 2 dimensions. We can see that the scaled extension have handled the structural constraints so far. It turns out we can repeatedly apply this and fill in blocks 3-8 the same way, and in that order. Now, we need to fill-in blocks 9-12. It turns out, however, that the constraints which involve blocks 9-12 already explicitly fix those entries. That is, there is no longer any degree of freedom for those entries: they can be inferred from the other entries. Crucially, note that block 9 is uniquely determined by blocks 4 and 7, without any inconsistencies or clashes. Since this is the case, we will express 9-12 as a sum of entries in the partially built $\Xi$.  Again, this can be done using the scaled extension, by setting $\mathcal{Y}$ to be a singleton, but choosing $h=0$ except for the indices we want to sum \textit{from}.

The order of expansion in Figure~\ref{fig:signalling} was carefully chosen. For example, if we had expanded 11 and 12 first, followed by blocks 9-10. However, in almost all cases, we have painted ourselves into the corner: block 2's constraints are such that it needs to be summed to by both 9 and 10. For example we set columns $r_x, r_y =0$, and expanded $[\varnothing, \ell_x]$, $[\varnothing, \ell_y] = 1$. Next, we expanded downwards we set $[G, \ell_x]= 1$ and $[B\, ell_y] = 1$. The, we try to `backfill' 2 using the constraints that involve 2. This would end up as having $[G, \varnothing]$ and $[B, \varnothing]$ both being equal to 1. But this  in turn means that they sum to 1 (the structural constraints of $[\varnothing, \varnothing]$). This shows that the order of expansion is crucial: not all will work well. In this example, the choice of expanding block 2 rather than blocks 11, and 12 was crucial. It turns out that a `good' order of decomposition always exists in games without chance.

\begin{definition}{(Critical infosets) \citet{farina2019efficient}}
Let $(\sigma_1, \sigma_2)$ be a relevant sequence pair and let $I_1 \in \mathcal{I}_1$ be an infoset for $P_1$ such that $\sigma(I_1) = \sigma_1$. Inofset $I_1$ is called critical for $\sigma_2$ if there exists at least one $I_2 \in \mathcal{I}_2$ with $\sigma(I_2) = \sigma_2$ such that $I_1 \connected I_2$. A symmetric definition holds for $I_2 \in \mathcal{I}_2$.
\end{definition}
A key result of \cite{farina2019efficient} was that in games without chance, for any relevant sequence pair $(\sigma_1, \sigma_2)$ at least one player has at most one critical infoset for the opponent's sequence. That player is called the \textit{critical player}.

\paragraph{The decompose subroutine}
The \textsc{Decompose} subroutine expands a sequence pair $(\sigma_1, \sigma_2)$ and adds it into $\mathcal{X}$. It is recursive in nature, and proceeds in 2 main steps. See \citet{farina2019efficient} for a much more detailed explanation.

Step 1: Find a critical player from $(\sigma_1, \sigma_2)$, where $\sigma_i = (I_i, a_i)$. This is guaranteed to exist. WLOG let that player be $P_1$. Then, expand all infosets $I$ if $\sigma_2=\varnothing$ or $I_1 \connected (I_2, a_2)$, and $\sigma(I) = \sigma_1$. Each expansion is done using the scaled extension, with $h$ being $0$ everywhere and a $1$ in the index of admission. 

Step 2: For each sequence $\sigma'$ immediately under $I$, call $\textsc{Decompose}(\sigma', \sigma_2)$. After this step, all indices in $\{ (\sigma_1, \sigma_2' ) | \sigma_2' \succ \sigma_2 \}$.

Step 3: We perform backfilling of structural constraints $\{ (\sigma_1, \sigma_2' | \sigma_2' \succ \sigma_2 \}$. First, if the critical player does have a critical infoset, then we will backfill---there are no longer any degrees of freedom. We will assign the value based on the constraint. If there is no critical infoset, then this we split by attaching more scaled extensions to simplexes. 

\paragraph{A key tiebreaking rule.} 
We perform the decomposition of $\Xi$ in the same order one would if we were performing for the full game \cite{farina2019efficient}, except that we terminate whenever encountering a sequence pair $(\sigma_1, \sigma_2) \not\in \hat{\game} \cup \check{\game}_j$. We show that this process expresses $\polytopesub$ and can be viewed as a series of scaled extensions. Consequently, there exists a regret minimizer over $\polytopesub$. 

However, there is an additional tiebreaking element which we may encounter in the decomposition algorithm: when faced with a choice to expand a sequence pair, we should prioritize sequence pairs which do \textit{not} lie in subgames over those which do. This prioritization is natural, since we do not want to fill in sequence pairs in the subgame before pre-subgame sequence pairs. This tiebreaking rule will not interfere with the rest of the decomposition algorithm.

We illustrate the problem with a simple example: a 2x2 matrix game (say, the Chicken Game, since this has interesting equilibria). Here, actions are simply sequences. We call the sequences for $P_i$, $\sigma_{i,1}$ and $\sigma_{i,2}$. Now the matrix game can be expressed in terms of an EFG. We assume player moves first, and takes an action. After that, $P_2$ takes his action, but without knowledge of $P_1$'s action. This is achieved using an information set that spans over the 2 states after $P_1$ took his action. Now, let us suppose a single subgame which contains only $P_2$'s actions, i.e., $P_1$ making his move was pre-subgame. This in turn means that the relevant sequence pairs $(\sigma_{1,1}, \varnothing)$ and $(\sigma_{1,2}, \varnothing)$ are not in a subgame (i.e., in a refinement, they are supposed to follow the blueprint), while all other sequence pairs (except for $(\varnothing, \varnothing)$) are in the subgame.
 
Now, if were to just apply the expansion order of \citet{farina2019efficient}, we would have two options: either expand $\sigma_{1,1}$ and $\sigma_{1,2}$ first, or there other way round. The reader may verify that expanding $\sigma_{1,1}$ and $\sigma_{1,2}$ first works fine. That is, we can expand $\sigma_{1,1}$ and $\sigma_{1,2}$, then $(\sigma_1, \sigma_2)$ (where neither $\sigma_1$ and $\sigma_2$ are the empty sequence). Then, we can backfill $\sigma_{2,1}$ and $\sigma_{2,2}$.

However, if we were to expand alongside $P_2$, i.e., $\sigma_{2,1}$ and $\sigma_{2,2}$, then a nasty situation occurs. After filling up the correlated non-empty sequence pairs, we have to backfill $\sigma_{1,1}$ and $\sigma_{1,2}$. This is not possible, since $\sigma_{1,1}$ and $\sigma_{1,2}$ were part of the blueprint! In general, we are not permitted to perform the backfill when the `source' is in a subgame and the `destination' is pre-subgame. The solution in this example is simple: when there is a tie as to which sequence pairs should be expanded, select the one that is pre-subgame. Can this simple solution always work? If we expanded the infosets that are not in subgames first before the other player's infoset (which is in a subgame), then we can be sure that in the backfill step we will \textit{never} fill in a sequence pair in $\hat{\game}$ with sum of sequence pairs in $\check{\game}_j$. However, can this expansion rule ever conflict with the expansion rule of \cite{farina2019efficient}---i.e., expanding infosets belonging to the critical player? It turns out this clash will never occur.

\begin{theorem}
Let $(\sigma_1, \sigma_2)$ be a relevant sequence pair, and $I_1$, $I_2, I_2'$ be infosets such that $\sigma(I_1)=\sigma_1$, $\sigma(I_2)=\sigma_2$, $\sigma(I_2')=\sigma_2$, and that $I_1 \connected I_2$ and $I_1 \connected I_2'$. (Verify that $I_1$ is a critical infoset for $(\sigma_1, \sigma_2)$.) It cannot be that $I_1$ belongs to some (any) subgame but either or both of $I_2$, $I_2'$ do not. 
\end{theorem}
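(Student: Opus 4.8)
The plan is a proof by contradiction, organized around the relative position in the game tree of two nodes witnessing $I_1 \connected I_2$. Suppose $I_1$ lies in some subgame $\check{\game}_j$ but that (after relabelling $I_2 \leftrightarrow I_2'$ if necessary) $I_2$ lies in no subgame at all. By property (ii) of Definition~\ref{def:subgame}, each infoset is entirely inside or entirely outside every subgame, so $I_2 \subseteq \hat{\game}$. From $I_1 \connected I_2$ fix nodes $u \in I_1$ and $v \in I_2$ on a common root-to-leaf path $P$; since $u$ and $v$ belong to different players they are distinct, hence one strictly precedes the other along $P$. I will also keep the second hypothesis in play only in the meaningful regime $I_2 \neq I_2'$ — two distinct $P_2$-infosets with the same parent sequence $\sigma_2$, both connected to $I_1$, are exactly what makes $P_1$ the critical player for $(\sigma_1,\sigma_2)$, and the statement is vacuously false if one allows $I_2 = I_2'$ (e.g. a single root infoset sitting pre-subgame above an in-subgame successor).

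The orientation $u \sqsubset v$ is disposed of immediately: $u \in I_1 \subseteq \check{\game}_j$, so by property (i) of Definition~\ref{def:subgame} (subgames are closed under descendants) $v$ lies in $\check{\game}_j$, and then property (ii) pulls all of $I_2$ into $\check{\game}_j$, contradicting $I_2 \subseteq \hat{\game}$. This case uses nothing beyond the subgame axioms.

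The substance is in ruling out the other orientation, $v \sqsubset u$. Here I would bring in a path $P'$ with nodes $u' \in I_1$, $v' \in I_2'$ witnessing $I_1 \connected I_2'$, and argue from perfect recall in three steps. \emph{(a)} Neither $P'$ meets a node of $I_2$ nor $P$ meets a node of $I_2'$: if, say, $\tilde v \in I_2$ lay on $P'$, then $\tilde v$ and $v'$ would be comparable nodes of distinct infosets both having parent sequence $\sigma_2$, and the shallower one hosts a $P_2$ move strictly before the deeper one — so the last $P_2$-decision infoset before the deeper node would lie at or below the shallower infoset, which is impossible since that last infoset is $\sigma_2$, the \emph{parent} of the shallower infoset. \emph{(b)} The node $w$ where $P$ and $P'$ diverge is a $P_2$ node: were it $P_1$'s, then (since $P,P'$ take different $P_1$-actions there) perfect recall applied to $u,u'\in I_1$ forces $u=u'$ with the split occurring below them, which puts $v$ (recall $v\sqsubset u$) on the shared prefix and hence on $P'$, contradicting (a); the same reasoning shows $v$ and $v'$ lie strictly below $w$. \emph{(c)} Since $v \in I_2$ and $v' \in I_2'$ both have parent sequence $\sigma_2$ and $w$ is a $P_2$ decision node strictly above both, $\sigma_2$ cannot be empty and the infoset $J_2$ carrying its last action is visited on both $P$ and $P'$; if $w \in J_2$ then $P$ and $P'$ would both take the action of $\sigma_2$ at $w$ and so would not diverge there, and if $w$ lies strictly above $J_2$ then the $J_2$-node on $P$ and the $J_2$-node on $P'$ would have different $P_2$-histories (recording the differing actions taken at $w$), contradicting perfect recall. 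Hence $v \sqsubset u$ cannot occur, only the first orientation survives, and the theorem follows.

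The main obstacle is precisely the $v \sqsubset u$ case: the subgame axioms say nothing when a node of $I_2$ sits \emph{above} $I_1$, so the contradiction has to be extracted from perfect recall and the equality $\sigma(I_2) = \sigma(I_2')$, and it genuinely needs the two distinct witnesses $I_2, I_2'$ — which is why the statement names both. The degenerate possibility $\sigma_2 = \varnothing$ should be checked as a short aside: then there is no $P_2$ move before $I_2$, so a $P_2$ node of $P$ strictly above $v$ already contradicts $\sigma(I_2) = \varnothing$, and step (c) is not even needed.
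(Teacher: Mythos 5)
Your proof is correct and takes essentially the same route as the paper's: dispose of the orientation with the $I_1$-node above the $I_2$-node via subgame closure, then show that the branching point of the two witness paths (the paper phrases this as the lowest common ancestor of the $I_2$ and $I_2'$ witnesses) cannot be a chance or $P_1$ node and, being a $P_2$ node where the paths take different actions, contradicts $\sigma(I_2)=\sigma(I_2')=\sigma_2$ under perfect recall. Your explicit restriction to $I_2\neq I_2'$ (with the observation that the literal statement fails when $I_2=I_2'$, e.g.\ a pre-subgame root infoset above an in-subgame $I_1$) is a caveat the paper leaves implicit, and your treatment of the degenerate cases ($\sigma_2=\varnothing$, comparable witnesses, the $u=u'$ subcase) is more careful than the paper's sketch, but the underlying argument is the same.
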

\begin{proof}
First, note that if any of $I_2$ or $I_2'$ lies in a subgame, it must be the same one as $I_1$, there exists a path starting from the root passing through $I_1$ and that infoset ($I_2$ or $I_2'$). Now, suppose WLOG that $I_2$ is not in a subgame but $I_1$ is. We will demonstrate a contradiction.

First, we know that there exists a state $h_2 \in I_2$ and a state $h_1 \in I_1$ where $h_2 \sqsubset h_1$. This is because $I_2 \connected I_1$. It cannot be that $h_1 \sqsubset h_2$, since that would imply that $I_2$ must belong to a a subgame (which contradicts our assumption). Now, since $I_2' \connected I_1$, there exists a state $h_2' \in I_2$ and $h_1' \in I_1$ which are along a path from the root (though this time, we do not know which precedes the other). 

Let $w$ be the lowest-common-ancestor of $h_2$ and $h_2'$. The state $w$ cannot be a chance node (since $\game$ has no chance). It also cannot belong to $P_1$, since this would mean that $h_1$ and $h_1'$ are in different infosets (they must have been the result of different actions of $P_1$ at $w$). Hence, $w$ must belong to $P_2$. But that is also impossible, since we assumed from the beginning that $\sigma(I_2)=\sigma(I_2')=\sigma_2$. Clearly this cannot be the case since they have different preceding sequences starting from $w$.  
\end{proof}

Put together, this means that the rules of expansion for will not conflict. If there ever needs to be backfilling, it will be either entirely within or outside a subgame, or backfilling entries within a subgame from entries before a subgame, and not the other way around (which will violate the equality-to-blueprint constraints).

%Once $\polytopesub$ is constructed in terms of scaled extensions, we need to express the resolving problem as a saddle-point problem, and subsequently, provide a regret minimizer for the deviators. Just like in \citep{farina2019efficient}, this is easy for trigger sequences $\sigma^! \in \check{\game}$. For example, this can be done by performing a slight variant of counterfactual regret minimization \citep{zinkevich2007regret} starting from each $\sigma^!$. Conceptually, the deviator is for each trigger sequence $\sigma^!$, looking to provide the strongest best deviating response such as to maximize $\beta^*$ and hence increasing exploitability. In this paper, we are only seeking for an exploitability no greater than $\max (0, \delta^*(\blueprint; \sigma^!))$. Hence, the deviator's payoff is given by $\beta^*(\tilde{\bm{\xi}}; \sigma^!) - \mu(\tilde{\bm{\xi}}; \sigma^!) - \delta^*(\blueprint; \sigma^!)$. That is, the deviator gets payoffs only for exploitability \textit{beyond} that of $\blueprint$. 

%However, as was discussed in the section describing the LP-based method, we still need to ensure incentive compatibility for $\tilde{\bm{\xi}}$ for $\sigma^! \in \hat{\game}$. This requires slightly more finesse. We use the same bounds generation method as in the construction of LP to attain upper and lower bounds $\breve{\mathcal{B}}_j$ and $\invbreve{\mathcal{B}}_j$. If these bounds are obeyed during convergence, then safety will be achieved by virtue of Theorem~\ref{thm:satisfies_bounds}. 

\end{document}